\theoremstyle{definition}
\theoremstyle{remark}
\newtheorem{theorem}{\protect\textbf{Theorem}}
\newtheorem{proposition}{\protect\textbf{Proposition}}
\newtheorem{corollary}{\protect\textbf{Corollary}}
\newtheorem{assumption}{\protect\textbf{Assumption}}
\providecommand{\keywords}[1]
{
  \small	
  {\textit{$~~~~~~$ KEYWORDS:}} #1
}
\title{Testing the Exogeneity of Instrumental Variables and Regressors in Linear Regression Models Using Copulas}
\author{Seyed Morteza Emadi}
\affil{University of North Carolina at Chapel-Hill \\ Kenan-Flagler Business School}
\date{}
\begin{document}
\maketitle

\begin{abstract}
We provide a Copula-based approach to test the exogeneity of instrumental variables in linear regression models. We show that the exogeneity of instrumental variables is equivalent to the exogeneity of their standard normal transformations with the same CDF value. Then, we establish a Wald test for the exogeneity of the instrumental variables. We demonstrate the performance of our test using simulation studies. Our simulations show that if the instruments are actually endogenous, our test rejects the exogeneity hypothesis approximately 93\% of the time at the 5\% significance level. Conversely, when instruments are truly exogenous, it dismisses the exogeneity assumption less than 30\% of the time on average for data with 200 observations and less than 2\%  of the time for data with 1,000 observations. Our results demonstrate our test's effectiveness, offering significant value to applied econometricians.
\end{abstract}
\keywords{Endogeneity, Instrumental Variables, Copulas}

\section{Introduction}

Suppose that we want to estimate the following linear regression model: $ Y_t=\beta_0+\beta^T X_t+\alpha P_t+\epsilon_t,~t=\{1,...,T\},$ where $X_t$ is the $k \times 1$ vector of exogenous regressors that are uncorrelated with the error term $\epsilon_t$, while $P_t$ is a scalar variable that is endogenous; i.e., it is potentially correlated with the error term. We are particularly interested in estimating the coefficient of the endogenous variable $\alpha$. Using instrumental variables is a popular approach to address this endogeneity issue (\cite{angrist1996identification}). Suppose that we have access to $m \times 1$ vector of instrumental variables denoted by $Z_t$, where $Z_t$ does not include any variable in $X_t$. There are two fundamental conditions in the approach of instrumental variables: first, the instruments $Z_t$ should be correlated with the exogenous variable (\textit{relevance}), and second, the instruments $Z_t$ should not be correlated with the error term (\textit{exogeneity} or \textit{validity}). Evaluation of the relevance condition is relatively straightforward, but testing the instrument exogeneity condition is not (\cite{wooldridge2010econometric}). Usually, researchers justify this condition using economic-theoretical arguments. However, in many situations, these arguments could be considered subjective beliefs that could not be supported by the data. On the other hand, it has been shown in the existing literature (\cite{kiviet2020testing}, \cite{dufour2003identification} and \cite{bound1995problems}) that a violation of this non-testable condition can not only lead to loss of efficiency, but can also result in even a higher estimation bias in comparison to the regular OLS. 

To alleviate these concerns, there have been two streams of literature. The first stream does not address testing the instrument exogeneity condition. It explores the effect of violation of this condition on the estimate of interest and the resulting inference (\cite{ashley2009assessing}, \cite{kraay2012instrumental} and \cite{nevo2012identification}). In other words, these works assess the robustness of different inferential conclusions to uncertainty in the validity of instruments. The second stream attempts to test the exogeneity of instruments with some caveats. \cite{sargan1958estimation} and \cite{hansen1982large} study formal testing of over-identification restrictions contingent on the validity of the initial just-identifying set of instruments. In other words, these works' methods rely on a non-testable assumption that the initial set of just-identifying instruments satisfies the exogeneity condition. As a result, as illustrated in \cite{deaton2010instruments} and \cite{parente2012cautionary}, these overidentification tests cannot reliably test the exogeneity for all instruments. 

Taking a different direction, a more recent stream of work evaluates the instrument exogeneity condition using an indirect approach (\cite{kiviet2020testing} and \cite{kripfganz2021kinkyreg}). In \cite{kiviet2020testing}, the authors adopt nonorthogonal moment conditions in regression by developing an instrument-free approach to estimate the parameters. The authors assume that the correlation between the endogenous variable and the error term is known. Then, they show that the instrument exogeneity condition is satisfied if the correlation value between the endogenous variable and the error term, which rejects the instrument exogeneity condition, is not in an admissible range. However, the issue is that the admissible range for the correlation coefficient of the endogenous variable and the error term is unknown from the data. Therefore, a researcher must rely on expert knowledge of the admissible degree of endogeneity. Relying on expert knowledge induces subjectivity to the approach in \cite{kiviet2020testing}. In \cite{kitagawa2015test}, the author tests a necessary condition for the exogeneity of the instruments based on ordering the conditional joint distribution of the outcome and the treatment depending on the instrument. However, the method in \cite{kitagawa2015test} applies only to cases with binary treatments and discrete instruments, which limits its applicability.  

Despite the lack of reliable methods to test instrument exogeneity in the existing literature, the issue remains critical due to the vast number of papers that leverage the instrumental variable approach. To address this issue, we propose a Copula-based method to test the exogeneity of instruments from the error term. Copulas have been used as an instrument-free approach to address endogeneity issues in the estimation of linear regression models (\cite{park2012handling} and \cite{yang2022addressing}). However, our work is the first to use this approach to test the exogeneity of instruments from the error term.

To model the endogeneity of $P_t$, we consider the reduced-form equation for it given by $ P_t=\delta_0+\delta^T X_t+\gamma^T Z_{t}+\eta_t$. Given that $X_t$ is exogenous, the endogeneity of $P_t$ implies the potential endogeneity of $Z_t$ and the error term $\eta_t$. Therefore, we model the endogeneity of $P_t$ by characterizing the joint distribution of the error term in the original regression equation $\epsilon_t$, $Z_t$ and $\eta_t$ by a Gaussian copula. The Gaussian copula is a multivariate normal distribution on the normal transformations of variables. The normal transformation of a variable in $\epsilon_t$, $Z_t$, and $\eta_t$ is a variable from the standard normal distribution with the same CDF value.  

\begin{sloppypar}
We assume that the error term has a normal distribution, which is a widely used assumption in the literature (\cite{kleibergen2003bayesian}, \cite{ebbes2005solving}, \cite{rossi2003bayesian}, \cite{park2012handling} and \cite{yang2022addressing}). However, our simulation studies show that our approach is also robust to non-normal error terms. Following the normality assumption of the error term, we show that the exogeneity of any variable from the error term $\epsilon_t$ is equivalent to the exogeniety of its normal transformation from the error term. This is not a trivial result because the endogeneity issue is the lack of Pearson correlation, which is different from statistical independence. Therefore, testing for the exogeneity of a variable is tantamount to verifying whether the correlation between the standard normal transformations of that variable and the error term is zero. Hence, testing the lack of correlation between the standard normal transformations of that variable and the error term is a necessary and also sufficient condition for testing the exogeneity of instruments. 
\end{sloppypar}
Next, using the Gaussian copula as the joint distribution of $\epsilon_t$, $Z_t$ and $\eta_t$, we decompose the error term $\epsilon_t$ into two components. The first is a linear combination of the standard normal transformations of $Z_t$ and $\eta_t$, to capture the endogeneity of $P_t$. The second component is a normal, independent error term essential for the consistent estimation of the regression model. We add the normal transformations of $Z_t$ and $\eta_t$ as added regressors to the original regression equation and estimate the model. We establish that the test for the exogeneity of the instruments is equivalent to a Wald test, where a linear transformation for the coefficients of the standard normal transformations of the instruments $Z_t$ should be equal to zero. 

Next, by simplifying the reduced-form equation for $P_t$ and setting $\eta_t=P_t$, we establish an instrument-free approach to test for the exogeneity of a regressor. This is a valuable approach given that in the existing literature, the regressor endogeneity test boils down to comparing the coefficient of the regressor under OLS and TSLS or testing if the coefficient of the error term from the first stage is not statistically significant in the second stage (\cite{durbin1954errors}, \cite{hausman1978specification}, \cite{wu1973alternative}, \cite{wooldridge2010econometric}). However, both of these methods rely on having an exogenous instrument from the error term. In other words, the traditional approach to test for the endogeneity of a regressor is not helpful if the researcher does not have access to an instrument. In addition, this approach can lead to a wrong conclusion if the instrument used in the test is actually not exogenous to the error term. In particular, if the instruments used in the Hausman endogeneity test are not exogenous to the error term, this test will reject the exogeneity hypothesis even if the regressor is actually exogenous. In a recent work, \cite{caetano2015test} develops an instrument-free approach to test the endogeneity of regressors in models with bunching. However, their method is based on two assumptions: 1) the outcome variable is continuous in the endogenous regressor, and 2) there is at least one unobservable confounder that is discontinuously distributed with respect to the endogenous variable. These two assumptions may not hold in various situations. Hence, the applicability of the method in \cite{caetano2015test} could be limited. 

Through a series of simulation studies, we evaluated our test's efficacy, yielding the following key findings:
\begin{itemize}
   \item \textbf{Instrument Exogeniety Test:} In cases where instruments are truly exogenous, our test dismisses the exogeneity assumption on average 5.8\% of the times for $T=200$ and 7\% of the times for $T=1,000$ at the 5\% significance level. Conversely, if instruments are actually endogenous, our test rejects the exogeneity hypothesis on average 76.6\% of the times for $T=200$ and 98.8\% of the times for $T=1,000$ at the 5\% significance level. 
   
   In other words, the type-I error of our test is less than 10\% and the type-II error is less than 30\%, which can be decreased to 3\% for larger data sets. Notably, our test remains robust even when the error term deviates from a normal distribution.
   
   \item \textbf{Regressor Exogeneity Test:} Our method demonstrates superior accuracy compared to the Hausman test for endogeneity, even when an exogenous instrument is available for analysis. Importantly, while the Hausman test performance significantly deteriorates if the instrument used in the test is not exogenous, our test's effectiveness remains unaffected due to its instrument-free nature.
\end{itemize}

A shortcoming of our instrument exogeneity test is that if an instrument is normally distributed and highly correlated with the endogenous variable $P_t$, its normal transformation will be highly correlated with $P_t$ creating the multicollinearity issue. Moreover, similar to \cite{park2012handling}, if the endogenous variable $P_t$ has a normal distribution, its normal transformation will be $P_t$ divided by its standard deviation, creating the multicollinearity issue in the regressor endogeneity test. In these cases, collecting a data set with a higher number of observations can mitigate the efficiency loss caused by multicollinearity.

Furthermore, we apply our test empirically in a Two-Stage Least Squares (TSLS) setting in Angrist's study on compulsory education, \cite{angristcompulsory}. Our test successfully validates the endogeneity of the education variable. Additionally, it confirms the exogeneity of the instruments used in this study, which include interactions between the year of birth and the quarter of birth. We also show that after adding the demographic variables as regressors the endogeneity of the education variable disappears. This empirical application underscores the practical effectiveness of our test in real-world TSLS scenarios.

To the best of our knowledge, our work is the first one that provides a rigorous approach to test instruments and regressor exogeneity with minimal assumptions and straightforward implementation. Our method holds significant promise for practitioners who previously had to depend on untestable arguments to substantiate their methodologies and findings. We are confident that our work will greatly improve the rigor and reliability in the field of IV regression, offering substantial benefits to those in applied econometrics and related disciplines.
\section{Testing the Exogeneity of Instrumental Variables Using Copulas}
We aim to estimate the following linear regression model:

\begin{equation}\label{reg}
    Y_t=\beta_0+\beta^T X_t+\alpha P_t+\epsilon_t,~t=\{1,...,T\},
\end{equation}
where $X_t$ represents a $k\times 1$ vector of exogenous regressors and $P_t$ is a scalar endogenous regressor. The term $\epsilon_t$ is the error component of the model. Here, $t$ indexes either time or cross-sectional units. We have at our disposal $m$ instrumental variables excluded from the original regression model to address the endogeneity of $P_t$. Denote by $Z_{t}$ the $(m\times 1)$ vector of instrumental variables, where $Z_{it}$ is the $i$th instrument. It is hypothesized that these instrumental variables may exhibit correlation with the error term $\epsilon_t$. Furthermore, for any variable $\kappa_t$, we denote its marginal CDF, its mean, and its standard deviation by $F_{\kappa}(\cdot)$, $\mu_{\kappa}$ and $\sigma_{\kappa}$, respectively. 

Consider the reduced-form equation for the endogenous variable expressed as follows:
\begin{equation}\label{firststagem}
    P_t=\delta_0+\delta^T X_t+\gamma^T Z_{t}+\eta_t.
\end{equation}
Given the correlation between $P_t$ and the error term in the main regression equation, and assuming the exogeneity of $X_t$, the potential correlation of $Z_t$ and $\eta_t$ with the error term $\epsilon_t$ becomes the focal point of consideration. We operate under the assumption that $Z_t$ and $\eta_t$ (the error term in the reduced-form equation) are uncorrelated.

According to Sklar's Theorem (\cite{sklar1959fonctions}), any multivariate joint distribution can be expressed by a copula function alongside the marginal distributions of the involved variables. Based on this theorem, we model the endogeneity of $Z_t$ and $\eta_t$ by assuming that a Gaussian copula can represent the joint distribution of $Z_t$, $\eta_t$ and $\epsilon_t$.
\vspace{3mm}
\begin{assumption}\label{assum1}
    \textbf{A Gaussian copula represents the joint distribution of $Z_t$, $\eta_t$ and $\epsilon_t$.}
\end{assumption}
\vspace{3mm}
Suppose that $H(Z_t, \eta_t,\epsilon_t)$ denotes the joint Cumulative Distribution Function (CDF) of the instrumental variables, the error term in the reduced-form equation for the endogenous variable, and the error term in the original regression equation. We can express this relationship as:
\begin{equation}
   H(Z_t, \eta_t,\epsilon_t) = \Psi(Z_t^* = [Z_{it}^*]_{i=1, \ldots, m}, \eta_t^*, \epsilon_t^*), \label{copula1}
\end{equation}
where $\Psi$ represents the CDF of a standard multivariate normal distribution. The variables $Z_t^* = [Z_{it}^*]_{i=1, \ldots, m}$, $\eta_t^*$, and $\epsilon_t^*$ are the standard normal transformations of $Z_t$, $\eta_t$, and $\epsilon_t$, respectively. For any variable $\kappa_t \in \{Z_t, \eta_t, \epsilon_t\}$, its standard normal transformation $\kappa_t^*$ is defined as a standard normal random variable corresponding to the same CDF value. The standard normal transformation of a variable $\kappa_t$ follows a standard normal distribution. The normal transformation $\kappa_t^*$ is defined as follows:
\begin{itemize}
    \item For continuous $\kappa_t$: We define $\kappa_t^* = \Phi^{-1}(F_{\kappa}(\kappa_t))$, where $\Phi$ is the CDF of the standard normal distribution.
    \item For discrete $\kappa_t$: Assuming $\kappa_t$ takes on $n$ possible values $a_1 < a_2 < \ldots < a_n$ with $p(\kappa_t = a_i) = q_i$, and $F_{\kappa}(a_i) = \Sigma_{j=1}^{i} q_i$. For $\kappa_t = a_i$, we set $\kappa_t^* = \Phi^{-1}(u_i^*)$, where $u_i^*$ is a uniform random variable between $F_{\kappa}(a_i)$ and $F_{\kappa}(a_{i+1})$. We define $a_0 = -\infty$ and $a_{n+1} = \infty$. In the discrete case, $\kappa_t^*$ may vary for identical values of $\kappa_t = a_i$ since $u_i^*$ is a randomly drawn variable each time.
\end{itemize}

We represent the Pearson correlation coefficient between two variables, $\kappa_t$ and $\chi_t$, from the set $\{Z_t, \eta_t, \epsilon_t\}$ as $\rho_{\kappa\chi}$. Similarly, the Pearson correlation between their corresponding standard normal transformations, denoted by $\kappa_t^*$ and $\chi_t^*$, is expressed as $\rho_{\kappa^*\chi*}$.

We assume that the error term \(\epsilon_t\) has a normal distribution,  which is a widely used assumption in the literature (\cite{kleibergen2003bayesian}, \cite{ebbes2005solving}, \cite{rossi2003bayesian}, \cite{park2012handling} and \cite{yang2022addressing}). This assumption leads to the following relationships: \(F_{\epsilon}(\epsilon_t) = \Phi(\sigma_{\epsilon} \epsilon_t^*)\) and \(\epsilon = \sigma_{\epsilon} \epsilon_t^*\).


\vspace{3mm}
\begin{assumption}\label{assum2}
    \textbf{The error term \(\epsilon_t\) has a mean-zero normal distribution \(N(0,\sigma_{\epsilon}^2)\).}
\end{assumption}
\vspace{3mm}
In the subsequent analysis, we establish that for any variable $\kappa_t$, the condition $\rho_{\kappa \epsilon} = 0$ holds if and only if $\rho_{\kappa^* \epsilon^*} = 0$. Put differently, the exogeneity of $\kappa_t$ is synonymous with the absence of correlation between its standard normal transformation $\kappa_t^*$ and $\epsilon_t^* = \epsilon_t / \sigma_{\epsilon}$. This result will be the basis for our exogeneity test. 


We emphasize the non-triviality of this result, particularly in distinguishing between exogeneity and independence conditions. The former merely necessitates the absence of correlation with the error term, as opposed to the latter's requirement of independence. Consider a continuous variable $\kappa_t \in \{Z_t, \eta_t, \epsilon_t\}$. We have $\kappa_t^*=\Phi^{-1}(F_{\kappa}(\kappa_t))$, indicating that $\kappa_t^*$ is a monotone transformation of $\kappa_t$. Consequently, if $\kappa_t$ is independent of $\epsilon_t$, it directly follows that $\kappa_t^*$ is independent of the transformed error term $\epsilon_t^*=\epsilon_t/\sigma_{\epsilon}$, leading to a correlation $\rho_{\kappa^* \epsilon^*} = 0$. However, the mere absence of correlation, $\rho_{\kappa \epsilon} = 0$, does not imply independence. Moreover, the standard normal transformation of $\kappa_t$ given by $\Phi^{-1}(F_{\kappa}(\kappa_t))$ is a non-linear transformation. Therefore, $\rho_{\kappa \epsilon} = 0$ resulting in $\rho_{\kappa^* \epsilon^*} = 0$ is not a trivial argument.

This result will be demonstrated first for variables $\kappa_t$ that are continuous, followed by an examination of discrete $\kappa_t$. We begin by presenting a proposition that articulates the relationship between $\rho_{\kappa \epsilon}$ and $\rho_{\kappa^* \epsilon^*}$ for a continuous variable $\kappa_t$.
\vspace{3mm}
\begin{proposition}\label{prop1}
    \textbf{
    Let \(\epsilon_t\) be a continuous mean-zero normal random variable, and \(\kappa_t\) a continuous random variable with CDF, mean, and standard deviation denoted by \(F_{\kappa}(\cdot)\), \(\mu_{\kappa}\), and \(\sigma_{\kappa}\) respectively. The correlation between \(\kappa_t\) and \(\epsilon_t\), denoted by \(\rho_{\kappa\epsilon}\), is related to the correlation factor of their standard normal transformations, denoted by \(\rho_{\kappa^* \epsilon^*}\), as follows:
    \begin{equation}\label{correleq}
        \rho_{\kappa\epsilon}=\rho_{\kappa^* \epsilon^*}~\frac{\int \nu F_{\kappa}^{-1}(\Phi(\nu)) \phi(\nu) d\nu }{\sigma_{\kappa}}.
    \end{equation}
    Furthermore, the relationship is bounded as follows:
    \begin{equation}\label{correlbound}
0 < |\frac{\int \nu F_{\kappa}^{-1}(\Phi(\nu)) \phi(\nu) d\nu }{\sigma_{\kappa}}| \leq       {\sqrt{1+(\frac{\mu_{\kappa}}{\sigma_{\kappa}})^2}}
    \end{equation}
    }
\end{proposition}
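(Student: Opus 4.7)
The plan is to exploit the bivariate normality of $(\kappa_t^*, \epsilon_t^*)$ that follows from the Gaussian copula assumption combined with the normality of $\epsilon_t$, and then to compute $E[\kappa_t \epsilon_t]$ by a standard Gaussian conditioning argument. Assumption~2 gives $\epsilon_t = \sigma_\epsilon \epsilon_t^*$; continuity of $\kappa_t$ gives $\kappa_t = F_\kappa^{-1}(\Phi(\kappa_t^*))$; and Assumption~1 makes $(\kappa_t^*, \epsilon_t^*)$ bivariate standard normal with correlation $\rho_{\kappa^* \epsilon^*}$.

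First I would write the regression decomposition $\epsilon_t^* = \rho_{\kappa^*\epsilon^*}\, \kappa_t^* + \sqrt{1 - \rho_{\kappa^*\epsilon^*}^2}\, W$ with $W \sim N(0,1)$ independent of $\kappa_t^*$. Then
\[
E[\kappa_t \epsilon_t] \;=\; \sigma_\epsilon\, E\!\left[F_\kappa^{-1}(\Phi(\kappa_t^*))\, \epsilon_t^*\right],
\]
and after substituting the decomposition the $W$-term vanishes by independence and $E[W]=0$, leaving $\sigma_\epsilon\, \rho_{\kappa^*\epsilon^*} \int \nu\, F_\kappa^{-1}(\Phi(\nu))\, \phi(\nu)\, d\nu$. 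Dividing by $\sigma_\kappa \sigma_\epsilon$, and using $E[\epsilon_t]=0$ so that the raw second moment $E[\kappa_t \epsilon_t]$ equals the covariance in the definition of $\rho_{\kappa\epsilon}$, produces equation~(\ref{correleq}).

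For the bound (\ref{correlbound}) the key identification is that $\int \nu\, F_\kappa^{-1}(\Phi(\nu))\, \phi(\nu)\, d\nu = E[\kappa_t^*\, \kappa_t]$. The upper bound is then immediate from the Cauchy--Schwarz inequality, since $|E[\kappa_t^* \kappa_t]| \leq \sqrt{E[(\kappa_t^*)^2]\, E[\kappa_t^2]} = \sqrt{\sigma_\kappa^2 + \mu_\kappa^2}$, which on dividing by $\sigma_\kappa$ is exactly the stated bound $\sqrt{1 + (\mu_\kappa/\sigma_\kappa)^2}$.

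The strict positivity is the one step I expect to be slightly less automatic. The idea is that continuity of $\kappa_t$ makes the map $\nu \mapsto F_\kappa^{-1}(\Phi(\nu))$ strictly increasing and nonconstant, so a standard covariance inequality for monotone transformations of a single random variable---most cleanly obtained by writing $\mathrm{Cov}(X, g(X)) = \tfrac{1}{2} E[(X - Y)(g(X) - g(Y))]$ with $Y$ an independent copy of $X = \kappa_t^*$---yields $\mathrm{Cov}(\kappa_t^*, \kappa_t) > 0$. Because $E[\kappa_t^*]=0$, this covariance equals $E[\kappa_t^* \kappa_t]$, ruling out zero. So this is the main (but mild) obstacle; the equality (\ref{correleq}) reduces cleanly to Gaussian conditioning and the upper bound to Cauchy--Schwarz.
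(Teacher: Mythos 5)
Your proof is correct, and for the identity and the upper bound it follows essentially the paper's route: the regression decomposition $\epsilon_t^* = \rho_{\kappa^*\epsilon^*}\kappa_t^* + \sqrt{1-\rho_{\kappa^*\epsilon^*}^2}\,W$ is exactly the Gaussian conditioning the paper invokes via the conditional-mean formula for the bivariate normal, and your Cauchy--Schwarz step is the paper's, applied directly on the probability space (reading the integral as $E[\kappa_t^*\kappa_t]$) rather than after the substitution $\lambda=\Phi(\nu)$. The one place you genuinely diverge is the strict positivity of $\int \nu F_{\kappa}^{-1}(\Phi(\nu))\phi(\nu)\,d\nu$. The paper folds the integral onto $[0,\infty)$ using the symmetry of $\phi$ and observes that $\nu\phi(\nu)\bigl[F_{\kappa}^{-1}(\Phi(\nu))-F_{\kappa}^{-1}(1-\Phi(\nu))\bigr]$ is pointwise positive because $F_{\kappa}^{-1}$ is strictly increasing. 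Your route via $\mathrm{Cov}(X,g(X))=\tfrac{1}{2}E[(X-Y)(g(X)-g(Y))]$ with an independent copy $Y$ is equally valid and arguably a bit more robust: it needs only that $g=F_{\kappa}^{-1}\circ\Phi$ be nondecreasing and nonconstant (guaranteed by $\sigma_{\kappa}>0$), whereas the paper's pointwise argument leans on strict monotonicity of $F_{\kappa}^{-1}$. Both deliver the same conclusion; there is no gap in your argument.
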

\vspace{3mm}
\begin{proof}
We have 
\begin{equation}\label{peq1}
E(\kappa_t \epsilon_t)=\rho_{\kappa\epsilon}\sigma_{\kappa}\sigma_{\epsilon}+E(\kappa_t) E(\epsilon_t) .  
\end{equation}
On the other hand, we can write, 
\begin{align}\label{peq2}
E(\kappa_t \epsilon_t)=&\int \int \kappa_t \epsilon_t ~d\Psi(\Phi^{-1}(F_{\kappa}(\kappa_t)),\Phi^{-1}(F_{\epsilon}(\epsilon_t))),\\
&\int \int F_{\kappa}^{-1}(\Phi(\kappa_t^*)) F_{\epsilon}^{-1}(\Phi(\epsilon_t^*)) \phi(\kappa_t^*,\epsilon_t^*, \rho_{\kappa\epsilon*}) d\kappa_t^* d\epsilon_t^*,\
\end{align}    
where $\phi(\kappa_t^*,\epsilon_t^*, \rho_{\kappa\epsilon*})$ is the PDf of a bivariate normal distribution with the correlation factor $\rho_{\kappa^*\epsilon*}$. Using \eqref{peq1}, \eqref{peq2}, $E(\epsilon_t)=0$, and $F_{\epsilon}^{-1}(\Phi(\epsilon_t^*))=\sigma_{\epsilon}\Phi^{-1}(\Phi(\epsilon_t^*))=\sigma_{\epsilon} \epsilon^*$, we have
\begin{equation}\label{peq3}
    \rho_{\kappa\epsilon}\sigma_{\kappa}=\int F_{\kappa}^{-1}(\Phi(\kappa_t^*)) \Big( \int \epsilon_t^* \phi(\kappa_t^*,\epsilon_t^*, \rho_{\kappa\epsilon*}) d\epsilon_t^* \Big) d\kappa_t^*.
\end{equation}
From Theorem 5.10.4 in \cite{degroot2011probability}, we can simplify the inside integral in \eqref{peq3} as follows.
\begin{equation}\label{peq4}
 \int \epsilon_t^* \phi(\kappa_t^*,\epsilon_t^*, \rho_{\kappa\epsilon*}) d\epsilon_t^*=E(\epsilon_t^*|\kappa_t^*)=\kappa_t^* \rho_{\kappa^* \epsilon^*} \phi(\kappa_t^*).
\end{equation}
Using \eqref{peq3}-\eqref{peq4} and changing the variable $\kappa_t^*$ to $\nu$, we have
    \begin{equation}\label{mainst}
        \rho_{\kappa\epsilon}=\rho_{\kappa^* \epsilon^*}~\frac{\int \nu F_{\kappa}^{-1}(\Phi(\nu)) \phi(\nu) d\nu }{\sigma_{\kappa}}.
    \end{equation}
This proves the first part of the proposition. For the second part, we can rewrite $\int \nu F_{\kappa}^{-1}(\Phi(\nu)) \phi(\nu) d\nu$ as follows by changing the variable to $\lambda=\Phi(\nu)$:
\begin{equation}\label{intchange}
    \int \nu F_{\kappa}^{-1}(\Phi(\nu)) \phi(\nu) d\nu=\int_{0}^1 F_{\kappa}^{-1}(\lambda) \Phi^{-1}(\lambda)d\lambda.
\end{equation}
By the Cauchy-Schwarz inequality, we have 
\begin{equation}\label{csch}
    |\int_{0}^1F_{\kappa}^{-1}(\lambda) \Phi^{-1}(\lambda)d\lambda| \leq \sqrt{\int_0^1 (F_{\kappa}^{-1}(\lambda))^2 d\lambda}   \sqrt{\int_0^1 (\Phi^{-1}(\lambda))^2 d\lambda}.
\end{equation}
It is straightforward to show that $\int_0^1 (F_{\kappa}^{-1}(\lambda))^2 d\lambda=E(\kappa^2)=\sigma_{\kappa}^2+\mu_{\kappa}^2$, and $\int_0^1 (\Phi^{-1}(\lambda))^2 d\lambda=E_{\Phi}(\lambda^2)=1$. Hence, we can rewrite \eqref{csch} as follows.
\begin{equation}\label{csch2}
    |\int_{0}^1F_{\kappa}^{-1}(\lambda) \Phi^{-1}(\lambda)d\lambda| \leq \sqrt{\sigma_{\kappa}^2+\mu_{\kappa}^2}.
\end{equation}
Using \eqref{intchange} and \eqref{csch2}, we have
\begin{equation*}
|\frac{\int \nu F_{\kappa}^{-1}(\Phi(\nu)) \phi(\nu) d\nu }{\sigma_{\kappa}}| \leq       {\sqrt{1+(\frac{\mu_{\kappa}}{\sigma_{\kappa}})^2}},
\end{equation*}
which is the right side of the inequality in \ref{correlbound}. For the left side of the inequality in \ref{correlbound}, given that $\nu \phi(\nu)=-1 \times (-\nu) \phi(-\nu)$ and $\Phi(-\nu)=1-\phi(\nu)$, we have the following.
\begin{align}
    \int \nu F_{\kappa}^{-1}(\Phi(\nu)) \phi(\nu) d\nu &= \int_0^\infty \nu \phi(\nu) [F_{\kappa}^{-1}(\Phi(\nu))-F_{\kappa}^{-1}(\Phi(-\nu))] d\nu, \notag \\
    &=\int_0^\infty \nu \phi(\nu) [F_{\kappa}^{-1}(\Phi(\nu))-F_{\kappa}^{-1}(1-\Phi(\nu))] d\nu. \label{int2}
\end{align}
For $\nu>0$, we have $\nu \phi(\nu)>0$, and $\Phi(\nu)>1-\Phi(\nu)$. Given that $F_{\kappa}(\cdot)$ is a strictly increasing function, its inverse is a strictly increasing function as well. Therefore, for $\nu>0$ we have $F_{\kappa}^{-1}(\Phi(\nu))-F_{\kappa}^{-1}(1-\Phi(\nu)) >0$. Consequently, we have $\nu \phi(\nu) [F_{\kappa}^{-1}(\Phi(\nu))-F_{\kappa}^{-1}(1-\Phi(\nu))] >0$. This results in $\int \nu F_{\kappa}^{-1}(\Phi(\nu)) \phi(\nu) d\nu=\int_0^\infty \nu \phi(\nu) [F_{\kappa}^{-1}(\Phi(\nu))-F_{\kappa}^{-1}(1-\Phi(\nu))] d\nu >0$, which gives the left-hand side of the inequality in \ref{correlbound}.
This concludes the proof.
\end{proof}
The following proposition follows from Proposition \ref{prop1}.
\vspace{3mm}
\begin{proposition}\label{correlcoro}
\textbf{
Consider the scenario where \(\epsilon_t\) is a continuous mean-zero normal random variable, and \(\kappa_t\) is a continuous random variable. In this setting, the condition \(\rho_{\kappa \epsilon} = 0\) (indicating no correlation between \(\kappa_t\) and \(\epsilon_t\)) holds if and only if \(\rho_{\kappa^* \epsilon^*} = 0\) (implying no correlation between their standard normal transformations \(\kappa_t^*\) and \(\epsilon_t^*\)).
}
\end{proposition}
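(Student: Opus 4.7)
The plan is to derive this biconditional directly from Proposition \ref{prop1}, which has essentially already done all of the work. That proposition gives the identity $\rho_{\kappa\epsilon} = C_\kappa \cdot \rho_{\kappa^*\epsilon^*}$, where $C_\kappa := \int \nu F_{\kappa}^{-1}(\Phi(\nu))\phi(\nu)\,d\nu \,/\, \sigma_\kappa$, together with the strict two-sided bound $0 < |C_\kappa| \leq \sqrt{1+(\mu_\kappa/\sigma_\kappa)^2}$. The decisive ingredient is the strict lower bound $|C_\kappa|>0$: it says that $C_\kappa$ is a nonzero real number, so the product $C_\kappa \cdot \rho_{\kappa^*\epsilon^*}$ vanishes if and only if $\rho_{\kappa^*\epsilon^*}$ does.

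From here the argument is mechanical. The forward direction ($\rho_{\kappa\epsilon}=0 \Rightarrow \rho_{\kappa^*\epsilon^*}=0$) follows by dividing both sides of the identity by the nonzero scalar $C_\kappa$; the reverse direction ($\rho_{\kappa^*\epsilon^*}=0 \Rightarrow \rho_{\kappa\epsilon}=0$) is immediate by substitution into the right-hand side. So the proof reduces to a two-line quotation: by \eqref{correleq}, $\rho_{\kappa\epsilon}=C_\kappa\cdot\rho_{\kappa^*\epsilon^*}$; by \eqref{correlbound}, $C_\kappa\neq 0$; hence $\rho_{\kappa\epsilon}=0 \iff \rho_{\kappa^*\epsilon^*}=0$.

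There is no genuine obstacle remaining, since the hard work -- namely establishing strict positivity of $|C_\kappa|$ via the symmetrization $\int_0^\infty \nu\phi(\nu)\,[F_\kappa^{-1}(\Phi(\nu))-F_\kappa^{-1}(1-\Phi(\nu))]\,d\nu$ and the strict monotonicity of $F_\kappa^{-1}$ -- has already been handled inside Proposition \ref{prop1}. The only conceptual point worth flagging is that strict monotonicity of $F_\kappa^{-1}$ depends on the continuity of $\kappa_t$; if $F_\kappa$ had flat regions, the integrand could vanish on a set of positive measure and $C_\kappa$ could be zero, which is exactly why the present corollary is restricted to the continuous case and why a separate treatment will be needed for discrete $\kappa_t$.
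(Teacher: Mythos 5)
Your proposal is correct and follows exactly the paper's own argument: both invoke the identity $\rho_{\kappa\epsilon}=C\cdot\rho_{\kappa^*\epsilon^*}$ from Proposition \ref{prop1} together with the strict bound $0<|C|$ to conclude the biconditional. Your added remark about why continuity of $\kappa_t$ is needed for $C\neq 0$ is a sensible clarification but does not change the route.
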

\begin{proof}
Proposition \ref{prop1} elucidates that for a continuous variable \(\kappa_t\), the correlation \(\rho_{\kappa \epsilon}\) can be expressed as \(\rho_{\kappa^* \epsilon^*} \times C\), where \(C\) is a non-zero bounded constant. Consequently, this implies that if \(\rho_{\kappa \epsilon} = 0\), then it necessarily follows that \(\rho_{\kappa^* \epsilon^*} = 0\), and conversely, if \(\rho_{\kappa^* \epsilon^*} = 0\), then \(\rho_{\kappa \epsilon} = 0\) as well.
\end{proof}

Proposition \ref{correldisceret} establishes the same result as Proposition \ref{correlcoro} for discrete $\kappa_t$.
\vspace{3mm}
\begin{proposition}\label{correldisceret}
\textbf{Consider \(\epsilon_t\) as a continuous mean-zero normal random variable and \(\kappa_t\) as a discrete random variable, where $\kappa_t$ takes on $n$ possible values $a_1 < a_2 < \ldots < a_n$ with $p(\kappa_t = a_i) = q_i$, and $F_{\kappa}(a_i) = \Sigma_{j=1}^{i} q_i$. The correlation \(\rho_{\kappa \epsilon} = 0\) if and only if \(\rho_{\kappa^* \epsilon^*} = 0\).}
\end{proposition}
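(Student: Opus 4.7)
The plan is to reduce the discrete case to a calculation structurally parallel to the one in Proposition \ref{prop1}. The key observation is that although the discrete $\kappa_t^*$ construction introduces auxiliary randomness via the uniform draws $u_i^*$, the standard normal transformation is still effectively an inverse of a deterministic step function. Specifically, setting $b_i = \Phi^{-1}(F_\kappa(a_i))$ for $i=0,\dots,n$ (with $b_0=-\infty$, $b_n=+\infty$), the copula-based construction means $\kappa_t = a_i$ exactly when $\kappa_t^* \in (b_{i-1}, b_i]$. Hence we can write $\kappa_t = g(\kappa_t^*)$ where
\[
g(u) = \sum_{i=1}^n a_i \, \mathbf{1}\{b_{i-1} < u \leq b_i\},
\]
which is a non-decreasing step function. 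This is the crucial step: it turns the probabilistic construction into an almost-sure functional relation.

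Next, I would compute $E(\kappa_t \epsilon_t)$ in terms of the joint density of $(\kappa_t^*, \epsilon_t^*)$. Since $\epsilon_t = \sigma_\epsilon \epsilon_t^*$ by Assumption \ref{assum2}, and $(\kappa_t^*, \epsilon_t^*)$ is bivariate normal with correlation $\rho_{\kappa^*\epsilon^*}$ under Assumption \ref{assum1}, I would write
\[
E(\kappa_t \epsilon_t) = \sigma_\epsilon \int \int g(u) v \, \phi(u, v, \rho_{\kappa^*\epsilon^*}) \, dv \, du,
\]
and apply the same conditional expectation identity used in \eqref{peq4}, namely $E(\epsilon_t^* \mid \kappa_t^* = u) = \rho_{\kappa^*\epsilon^*} u$, to collapse the inner integral. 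Combined with $E(\epsilon_t)=0$, this yields
\[
\rho_{\kappa\epsilon} \, \sigma_\kappa = \rho_{\kappa^*\epsilon^*} \int u\, g(u)\, \phi(u)\, du,
\]
which is the discrete analogue of \eqref{mainst}.

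The remaining step is to verify that $C := \int u\, g(u)\, \phi(u)\, du$ is nonzero, so that $\rho_{\kappa\epsilon}=0 \iff \rho_{\kappa^*\epsilon^*}=0$. Following the symmetrization trick used at the end of Proposition \ref{prop1}, I would rewrite
\[
C = \int_0^\infty u\, \phi(u)\, [g(u) - g(-u)]\, du.
\]
Since $g$ is non-decreasing and $\kappa_t$ takes at least two distinct values $a_1<\dots<a_n$ with positive probability, $g(u) - g(-u) \geq 0$ for all $u>0$, and strict inequality holds on a set of positive Lebesgue measure (for any $u > \max(|b_{n-1}|,|b_1|)$, one has $g(u)=a_n$ and $g(-u)=a_1$). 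Combined with $u\phi(u)>0$ on $(0,\infty)$, this forces $C>0$, completing the proof.

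The main obstacle I anticipate is bookkeeping around the discrete construction: one must be careful that the representation $\kappa_t = g(\kappa_t^*)$ holds almost surely under the copula model, since $\kappa_t^*$ is built conditionally on $\kappa_t$, not the other way around. Once this consistency is established (it follows directly from the construction of $u_i^*$ being uniform on $(F_\kappa(a_{i-1}), F_\kappa(a_i))$), the rest of the argument mirrors the continuous case and the positivity of $C$ follows from the non-triviality of $\kappa_t$.
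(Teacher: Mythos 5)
Your proof is correct, but it takes a genuinely different route from the paper's. The paper conditions on $\kappa_t$: it argues that $E(\epsilon_t\mid\kappa_t)=E(\epsilon_t\mid\kappa_t^*)$ because each determines the other up to the auxiliary randomization, then handles the two directions separately (joint normality plus zero correlation gives independence of the transforms for one direction; for the converse it sets up the two moment conditions $\sum_i q_iE(\epsilon_t\mid a_i)=0$ and $\sum_i q_ia_iE(\epsilon_t\mid a_i)=0$ and concludes all conditional means vanish). You instead observe that the randomized construction collapses to an almost-sure monotone step-function relation $\kappa_t=g(\kappa_t^*)$, which lets you rerun the continuous-case computation of Proposition \ref{prop1} verbatim and obtain the explicit identity $\rho_{\kappa\epsilon}\sigma_\kappa=\rho_{\kappa^*\epsilon^*}\int u\,g(u)\phi(u)\,du$ with a strictly positive constant. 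Your approach buys three things: it unifies the discrete and continuous cases under one formula; it yields a quantitative proportionality (not just the equivalence of the two null hypotheses), mirroring the bound in Proposition \ref{prop1}; and it avoids the weakest step of the paper's converse, since two linear constraints on the $n$ quantities $q_iE(\epsilon_t\mid a_i)$ do not by themselves force all of them to vanish when $n\ge 3$, whereas your argument never needs that claim. What the paper's route buys is that it leans less heavily on exact joint normality of $(\kappa_t^*,\epsilon_t^*)$ in the converse direction. Two small housekeeping points: you silently use the interval $(F_\kappa(a_{i-1}),F_\kappa(a_i))$ for $u_i^*$ where the paper writes $(F_\kappa(a_i),F_\kappa(a_{i+1}))$ --- yours is the version under which $\kappa_t^*$ is actually standard normal and $g$ is well defined, so this is a correction rather than an error --- and your positivity argument for $C$ requires $n\ge 2$, which is harmless since otherwise $\sigma_\kappa=0$ and $\rho_{\kappa\epsilon}$ is undefined.
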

\begin{proof}
For \(\kappa_t = a_i\), \(\kappa_t^* = \Phi^{-1}(u_i^*)\), where \(u_i^*\) is uniformly distributed between \(F_{\kappa}(a_i)\) and \(F_{\kappa}(a_{i+1})\), with \(a_0 = -\infty\) and \(a_{n+1} = \infty\). Knowing \(\kappa_t^*\) uniquely determines \(\kappa_t\), as distinct \(\kappa_t\) values lead to different \(\kappa_t^*\) values. Thus, \(E(\epsilon_t | \kappa_t^*, \kappa_t) = E(\epsilon_t | \kappa_t^*)\).

\(\kappa_t^*\) is derived based solely on \(\kappa_t = a_i\), making \(\epsilon_t\) independent of \(\kappa_t^*\) given \(\kappa_t = a_i\). This implies \(E(\epsilon_t | \kappa_t^*, \kappa_t) = E(\epsilon_t | \kappa_t)\). Hence, $E(\epsilon_t | \kappa_t)=E(\epsilon_t | \kappa_t^*)$. With \(\epsilon_t^* = \epsilon_t / \sigma_{\epsilon}\), it follows that $\sigma_{\epsilon}E(\epsilon_t^* | \kappa_t)=\sigma_{\epsilon}E(\epsilon_t^* | \kappa_t^*)$.

Suppose that $\rho_{\kappa^* \epsilon^*}=0$. Given that both $\epsilon_t^*$ and $\kappa_t^*$ are normal random variables, zero correlation means that they are independent. The independence of \(\epsilon_t^*\) and \(\kappa_t^*\) implies \(E(\epsilon_t^* | \kappa_t^*) = E(\epsilon_t^*) = 0\). As \(E(\epsilon_t | \kappa_t) = \sigma_{\epsilon}E(\epsilon_t^* | \kappa_t^*)\), it follows that \(E(\epsilon_t | \kappa_t) = 0\). Therefore, \(Cov(\epsilon_t, \kappa_t) = E(\epsilon_t \kappa_t) - E(\epsilon_t)E(\kappa_t) =E(E(\epsilon_t \kappa_t|\kappa_t))= E(\kappa_t E(\epsilon_t |\kappa_t))=0\), leading to \(\rho_{\kappa \epsilon} = 0\).

Conversely, suppose that $\rho_{\kappa \epsilon}=0$. Therefore, $E(\epsilon_t \kappa_t)-E(\epsilon_t )E(\kappa_t)=0$. Given $E(\epsilon_t)=0$, we have $E(\epsilon_t \kappa_t)=0$. We can write the following.
\begin{itemize}
    \item \(0 = E(\epsilon_t) = \Sigma_{i=1}^n q_i E(\epsilon_t | a_i)\). \textbf{(b1)}
    \item \(0 = E(\epsilon_t \kappa_t) = \Sigma_{i=1}^n q_i a_i E(\epsilon_t | a_i)\). \textbf{(b2)}
\end{itemize}
Given that $a_i$s are distinct numbers, \textbf{(b1)} and \textbf{(b2)} show that two weighted sums of $E(\epsilon_t|a_i)$s with different weights are zero. This can happen only if $E(\epsilon_t|a_i)=0$ for all $a_i$s. In other words, $E(\epsilon_t |\kappa_t)=E(\epsilon_t)=0$. This results in $E(\epsilon_t^* | \kappa_t^*)=E(\epsilon_t^*)=0$, which leads $E(\epsilon_t^* \kappa_t^*)=E(E(\kappa_t^* \epsilon_t^* | \kappa_t^*))=E(\kappa_t^* E( \epsilon_t^* \\ | \kappa_t^*))=0$. Therefore, we have $\rho_{\epsilon^* \kappa *}=0$.

\end{proof}
Building upon the findings of Propositions \ref{prop1}, \ref{correlcoro}, and \ref{correldisceret}, we formulate a methodology to test the exogeneity of the instrumental variables \(Z_t\). This test hinges on examining the correlation between the standard normal transformation of \(Z_t\), denoted as \(Z_t^*\), and the normalized error term \(\epsilon^*\). Specifically, the test seeks to ascertain whether \(\rho_{Z_i^* \epsilon^*} = 0\). The details and formal structure of this exogeneity test are outlined in Theorem \ref{zexo}.

\vspace{5mm}
\begin{theorem}\textbf{[Testing Exogeneity of the Instrumental Variables Using the Gaussian Copula] \label{zexo}Denote by $\Sigma_{Z^*}$ the correlation matrix for the standard normal transformation vector of the instrumental variables $Z_t^*$. Testing the exogeneity of the instruments $Z_t$ for the endogenous variable $P_t$ in $Y_t=\beta^T X_t+\alpha P_t+\epsilon_t$ is equivalent to testing for $\Sigma_{Z^*} \theta_{Z_t^*}=0$ in the following OLS regression
\begin{equation}\label{newform0}
    Y_t=\beta_0+\beta^T X_t+\alpha P_t+\theta_{Z_t^*}^T Z_t^*+\theta_{\eta_t^*} \eta_t^*+\xi_t,
\end{equation}
where $\eta_t^*$ is the standard normal transformation of the error term from the reduced-form equation of the endogenous variable given by
\begin{equation}\label{reducedformtheorem}
    P_t=\delta_0+\delta^T X_t+\gamma^T Z_{t}+\eta_t.
\end{equation}}
\end{theorem}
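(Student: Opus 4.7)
The plan is to use Assumptions \ref{assum1} and \ref{assum2} to rewrite $\epsilon_t$ as a linear combination of $Z_t^*$, $\eta_t^*$ and a Gaussian residual, then read off the coefficient structure of the augmented regression \eqref{newform0} and invoke Propositions \ref{correlcoro} and \ref{correldisceret} to convert the exogeneity hypothesis into a linear restriction on those coefficients.

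First, by Assumption \ref{assum1} the triple $(Z_t^*,\eta_t^*,\epsilon_t^*)$ is jointly multivariate normal. Since $Z_t$ and $\eta_t$ are assumed uncorrelated, Propositions \ref{correlcoro} and \ref{correldisceret} (applied componentwise) give $\rho_{Z_i^*\eta^*}=0$, and for jointly normal variables zero correlation upgrades to independence. Hence $(Z_t^*,\eta_t^*)$ has a block-diagonal covariance with blocks $\Sigma_{Z^*}$ and $1$. I then project $\epsilon_t^*$ onto the subspace spanned by $(Z_t^*,\eta_t^*)$. Using the block-diagonal structure, the standard multivariate-normal projection formula yields
\[
\epsilon_t^* \;=\; \rho_{Z^*\epsilon^*}^{T}\,\Sigma_{Z^*}^{-1}\,Z_t^* \;+\; \rho_{\eta^*\epsilon^*}\,\eta_t^* \;+\; u_t,
\]
with $u_t$ mean-zero Gaussian and independent of $(Z_t^*,\eta_t^*)$. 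By Assumption \ref{assum2}, $\epsilon_t=\sigma_\epsilon \epsilon_t^*$; multiplying through by $\sigma_\epsilon$ and substituting into \eqref{reg} produces exactly the augmented form \eqref{newform0} with
\[
\theta_{Z_t^*} \;=\; \sigma_\epsilon\,\Sigma_{Z^*}^{-1}\,\rho_{Z^*\epsilon^*}, \qquad \theta_{\eta_t^*} \;=\; \sigma_\epsilon\,\rho_{\eta^*\epsilon^*}, \qquad \xi_t \;=\; \sigma_\epsilon u_t.
\]
Since $u_t$ is independent of $(Z_t^*,\eta_t^*)$, and through the reduced form \eqref{reducedformtheorem} also of $P_t$ given $X_t$, the display above is a valid population OLS specification, so $\theta_{Z_t^*}$ is identified as the true coefficient vector on $Z_t^*$.

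Finally, applying Propositions \ref{correlcoro} (continuous instruments) and \ref{correldisceret} (discrete instruments) one coordinate at a time, the exogeneity condition $\rho_{Z_i\epsilon}=0$ for every $i$ is equivalent to $\rho_{Z^*\epsilon^*}=0$. Combining this with the identity $\Sigma_{Z^*}\theta_{Z_t^*}=\sigma_\epsilon\rho_{Z^*\epsilon^*}$ derived above, and using $\sigma_\epsilon>0$, exogeneity becomes equivalent to the linear restriction $\Sigma_{Z^*}\theta_{Z_t^*}=0$, which is exactly the Wald null stated in the theorem. The main obstacle I anticipate is the bookkeeping step of verifying that $\xi_t=\sigma_\epsilon u_t$ is orthogonal not only to the newly added regressors $(Z_t^*,\eta_t^*)$ but also to $X_t$ and $P_t$; this requires treating $X_t$ as external to the copula triple and using the reduced-form representation of $P_t$ to conclude that the population OLS coefficient on $Z_t^*$ coincides with the Gaussian-projection coefficient $\sigma_\epsilon\Sigma_{Z^*}^{-1}\rho_{Z^*\epsilon^*}$. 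Everything else is routine Gaussian-projection arithmetic combined with the correlation equivalence already proved in Propositions \ref{correlcoro} and \ref{correldisceret}.
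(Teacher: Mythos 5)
Your proposal is correct and follows essentially the same route as the paper: both derive the identical decomposition $\epsilon_t^* = \rho_{Z^*\epsilon^*}\Sigma_{Z^*}^{-1}Z_t^* + \rho_{\eta^*\epsilon^*}\eta_t^* + \xi_t/\sigma_{\epsilon}$ and then invoke Propositions \ref{correlcoro} and \ref{correldisceret} to convert $\rho_{Z^*\epsilon^*}=0$ into the Wald restriction $\Sigma_{Z^*}\theta_{Z_t^*}=0$. The only difference is cosmetic: you obtain the decomposition from the Gaussian conditional-expectation (projection) formula, whereas the paper computes it via a Cholesky factorization of $\Sigma_*$; both yield the same coefficients and the same independent residual, and both leave the orthogonality of $\xi_t$ to $X_t$ and $P_t$ (and the zero correlation between $Z_t^*$ and $\eta_t^*$) at the same level of informality.
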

\begin{proof}
Under the Gaussian copula framework, the joint distribution of the standard normal transformations of the instrumental variables $Z_t^*$, the error term in the reduced-form equation for the endogenous variable $\eta_t^*$, and the error term in the original regression equation $\epsilon_t^*$
is characterized by a mean-zero multivariate normal distribution. This joint distribution encapsulates the interdependencies among these variables, as captured by the Gaussian copula, and is formalized as follows:
\begin{equation}\label{copulam}
\left[\begin{array}{l}
Z_t^{*} \\
\eta_t^{*} \\
\epsilon_t^{*}
\end{array}\right]\sim N\left([0]_{(m+2) \times 1},\Sigma_*\right).
\end{equation}\label{copulamsigma}
The covariance matrix $\Sigma_*$ is given by
\begin{equation}
    \Sigma_*=\left[\begin{array}{ccc}
\Sigma_{Z^*} & [0]_{(m \times 1)}  & \rho_{Z^* \epsilon^*}^T \\
{[0]}_{(1 \times m)} & 1 & \rho_{\eta^* \epsilon^*} \\
\rho_{Z^* \epsilon^*} & \rho_{\eta^* \epsilon^*} & 1
\end{array}\right],
\end{equation}
where $\Sigma_{Z^*}$ is the covariance matrix of $Z_{t}^*$, $\rho_{\eta^* \epsilon^*}$ is the correlation between $\eta_{t}^*$ and $\epsilon_t^*$, and $\rho_{Z^*\epsilon^*}=[\rho_{Z_{it}^* \epsilon_t^*}]_{(i=1 \ldots m)}$, where $\rho_{Z_{it}^* \epsilon_t^*}$ is the correlation between $Z_{it}^*$ and $\epsilon_t^*$.

Suppose that Cholesky decompositions of $\Sigma_*$ and $\Sigma_{Z^*}$ are given by $\Sigma_*=L L^T$ and $\Sigma_{Z^*}=L_Z L_Z^T$. We have the following.
\begin{equation}\label{choleskyL}
L=\left[\begin{array}{ccc}
L_{Z^*} & {[0]}_{(m \times 1)} & {[0]}_{(m \times 1)} \\
{[0]}_{(1 \times m)} & 1 & 0 \\
V_{(1 \times m)}& \tau & \zeta
\end{array}\right],  
\end{equation}
and 
\begin{equation}\label{choleskyLT}
L^{T}=\left[\begin{array}{ccc}
L_{Z^*}^T & {[0]}_{(m \times 1)} & V^{T} \\
{[0]}_{(1 \times m)} & 1 & \tau \\
{[0]}_{(1 \times m)} & 0 & \zeta
\end{array}\right].
\end{equation}
The relationships \(\tau = \rho_{\eta^* \epsilon^*}\) and \(\zeta = \sqrt{1 - \tau^2 - \sum_{i=1}^m V_i^2}\) can be readily established. Considering the Cholesky decomposition of \(\Sigma_*\), we obtain the following:
\begin{equation}\label{multinormal}
\left[\begin{array}{l}
Z_t^{*} \\
\eta_t^{*} \\
\epsilon_t^{*}
\end{array}\right]=L \times \left[\begin{array}{l}
\Omega \\
\omega_{m+1} \\
\omega_{m+2}
\end{array}\right] \text{where} \quad \Omega=\left[\omega_i\right]_{i=1\ldots m},
\end{equation}
and all $\omega_i$ are standard normal random variables. From \eqref{multinormal}, it is straightforward to see that $\omega_{m+1}=\eta_t^*$

Multiplying the third row of $L$ and the first column of $L^T$ gives us the vector $\rho_{Z^* \epsilon^*}$. Therefore, we have $V L_{Z^*}^T=\rho_{Z^* \epsilon *}$. Multiplying both sides by $(L_{Z^*}^T)^{-1}$, we have the following.
\begin{equation}\label{Vequation}
 V=\rho_{Z^* \epsilon^*} (L_{Z^*}^T)^{-1}.
\end{equation}
From \eqref{choleskyL} and \eqref{multinormal}, we can write $Z_t^*=L_{Z^*} \Omega$. Therefore, we have the following.
\begin{equation}\label{Omegaequation}
 \Omega=(L_{Z^*})^{-1} Z_t^*.
\end{equation}
Moreover, from \eqref{choleskyL} and \eqref{multinormal}, we can write
\begin{equation}
    \epsilon_t^*=V \Omega+\rho_{\eta^* \epsilon^*} \omega_{m+1}+\zeta \omega_{m+2}.
\end{equation}
Substituting $V$ and $\Omega$ from equations \eqref{Vequation} and \eqref{Omegaequation}, we can rewrite $\epsilon_t^*$ as follows
\begin{align}
    \epsilon_t^* &=\rho_{Z^* \epsilon^*} (L_{Z^*}^T)^{-1} (L_{Z^*})^{-1} Z_t^*+\rho_{\eta^* \epsilon^*} \eta_t^*+\zeta \omega_{m+2} \notag\\ 
    &=\rho_{Z^* \epsilon^*}(L_{Z^*}L_{Z^*}^T)^{-1}Z_t^*+\rho_{\eta^* \epsilon^*} \eta_t^*+\zeta \omega_{m+2} \label{epsilon*m}\\
    &=\rho_{Z^* \epsilon^*}(\Sigma_{Z^*})^{-1}Z_t^*+\rho_{\eta^* \epsilon^*} \eta_t^*+\zeta \omega_{m+2}.\notag
\end{align}
Given that $\epsilon_t=\sigma_{\epsilon} \epsilon_t^*$, using \eqref{epsilon*m}, we can rewrite the original OLS equation as follows:
\begin{equation}\label{newform1}
    Y_t=\beta^T X_t+\alpha P_t+\sigma_{\epsilon}\rho_{Z^* \epsilon^*}(\Sigma_{Z^*})^{-1}Z_t^*+\sigma_{\epsilon}\rho_{\eta^* \epsilon^*} \eta_t^*+\sigma_{\epsilon}\zeta \omega_{m+2}.
\end{equation}
Because $\omega_{m+2}$ is not correlated with any of the regressors in \eqref{newform1}, we can consistently estimate this OLS equation. Assuming $\xi_t=\sigma_{\epsilon}\zeta \omega_{m+2}$, $\theta_{\eta_t^*}=\sigma_{\epsilon}\rho_{\eta^* \epsilon^*}$, and $\theta_{Z_t^*}=\sigma_{\epsilon}(\Sigma_{Z^*})^{-1}\rho_{Z^* \epsilon^*}^T$, equations \eqref{newform0} and \eqref{newform1} are equivalent. Note that because $\Sigma_{Z^*}$ is a symmetric matrix, we have $\Sigma_{Z^*}^T=\Sigma_{Z^*}$.

From Corollary \ref{correlcoro}, we know that the exogeneity condition is equivalent to the test for $\rho_{Z^* \epsilon^*}=0$. Given that $\Sigma_{Z^*} \theta_{Z_t^*}= \sigma_{\epsilon}\rho_{Z^* \epsilon^*}$, testing for the exogeneity of the instruments is the same as testing for $\Sigma_{Z^*} \theta_{Z_t^*}=0.$ This can be done by a Wald test to verify the condition $\Sigma_{Z^*} \theta_{Z_t^*}=0.$.

\end{proof}
\subsection{Discussion}
Theorem \ref{zexo} demonstrates that testing the exogeneity of the instrumental variables essentially reduces to a Wald test for \(\Sigma_{Z^*} \theta_{Z_t^*} = 0\). Furthermore, the exogeneity of the error term in the reduced-form equation for the endogenous variable \(\eta_t^*\) can be assessed by verifying \(\theta_{\eta_t^*} = 0\). Notably, \(\theta_{\eta_t^*} = 0\) implies \(\rho_{\eta^* \epsilon^*} = 0\), which subsequently leads to \(\rho_{\eta \epsilon} = 0\).

A potential scenario that may adversely affect the performance of our exogeneity test involves multicollinearity in the estimation of \eqref{newform0}. Specifically, if \(P_t\) is highly correlated with \(Z_t^*\), it leads to multicollinearity, which in turn causes inefficiency in regression estimates and the Wald test. Such high correlation may emerge under two conditions:
\begin{enumerate}
    \item An instrumental variable \(Z_{it}\) has a strong correlation with \(P_t\).
    \item The instrument follows a normal distribution, resulting in \(Z_{it}^* = Z_{it} / \sigma_{Z_{it}}\).
\end{enumerate}
A larger sample size may be required to mitigate the inefficiency issue in scenarios satisfying these conditions,.

Furthermore, considering that \(\Sigma_{Z^*} \theta_{Z_t^*} = \sigma_{\epsilon}\rho_{Z^* \epsilon^*}\), we can utilize the estimated coefficients ($\widehat{\theta_{Z_t^*}}$) and the Root Mean Square Error (RMSE) from the regression equation ($\widehat{\sigma_{\epsilon}}$) in \eqref{newform0} to derive an estimate for \(\widehat{\rho_{Z^* \epsilon^*}} = [\widehat{\rho_{Z_i^* \epsilon^*}}]\). Subsequently for continuous instrumental variables, employing the results from Proposition \ref{prop1}, we are able to calculate an estimate \(\widehat{\rho_{Z \epsilon}} = [\widehat{\rho_{Z_i \epsilon}}]\). This approach enables us to gauge the extent of endogeneity for each instrument, thereby aiding in more informed instrument selection. Instruments exhibiting higher levels of correlation can be excluded, and we also gain insight into the potential bias introduced in the Two-Stage Least Squares (TSLS) approach. This methodology not only refines instrument selection but also enhances the overall reliability of the econometric analysis.

In the next section, we delineate the application of Theorem \ref{zexo} for testing the endogeneity of regressors in the absence of instrumental variables. This approach  leverages the theorem's results to test for endogeneity directly, bypassing the traditional reliance on instruments.
\section{An Instrument-Free Approach to Test for Regressor Exogeneity Using Copulas}\label{endogeneity}
Suppose that we want to estimate the regression model in \eqref{reg}, where \(P_t\) is a potentially endogenous variable. Traditional methods to test endogeneity, such as those proposed by \cite{durbin1954errors}, \cite{hausman1978specification}, \cite{wu1973alternative}, and \cite{wooldridge2010econometric}, require access to an exogenous instrument. In scenarios devoid of such instruments, these methods offer no reliable means to ascertain the exogeneity of \(P_t\). This section introduces an instrument-free approach, leveraging Theorem \ref{zexo}, to test the exogeneity of a regressor.

Theorem \ref{zexo} provides a framework to test the exogeneity of instruments. Since \(\theta_{\eta_t^*}\) in \eqref{newform0} equals \(\sigma_{\epsilon}\rho_{\eta^* \epsilon^*}\), and in light of Propositions \ref{correlcoro} and \ref{correldisceret}, testing the exogeneity of \(\eta_t\) is analogous to verifying \(\theta_{\eta_t^*} = 0\). To assess the exogeneity of \(P_t\), we modify the reduced-form equation in \eqref{reducedformtheorem} by setting \(P_t^* = \eta_t\). Given that \(P_t^*\) follows a standard normal distribution, \(\eta_t^* = P_t^*\). This allows us to apply the results of Theorem \ref{zexo} under the assumption that there are no instruments and \(\eta_t^* = P_t^*\). The ensuing corollary formalizes this test:

\vspace{3mm}
\begin{corollary}\textbf{[A Test for Regressor Exogeneity Using Copulas]
Testing the exogeneity of the regressor \(P_t\) in the regression model \(Y_t = \beta_0 + \beta^T X_t + \alpha P_t + \epsilon_t\) is equivalent to testing \(\theta_{P_t^*} = 0\) in the following OLS regression:
\begin{equation}\label{newformendo}
    Y_t = \beta_0 + \beta^T X_t + \alpha P_t + \theta_{P_t^*} P_t^* + \xi_t,
\end{equation}
where \(P_t^*\) is the standard normal distribution transformation of \(P_t\).}
\end{corollary}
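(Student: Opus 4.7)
The plan is to deduce this corollary as an immediate specialization of Theorem \ref{zexo} to the setting without instrumental variables. First I would observe that when no $Z_t$ is available, the setup collapses to a Gaussian copula joint distribution for the pair $(\eta_t, \epsilon_t)$ alone, and the reduced-form equation \eqref{reducedformtheorem} reads simply $P_t = \delta_0 + \delta^T X_t + \eta_t$. Since any one-to-one reparametrization of the residual can play the role of $\eta_t$, I would take $\eta_t = P_t^*$; this is permissible because $P_t^* = \Phi^{-1}(F_P(P_t))$ is, conditional on $X_t$, a one-to-one function of $P_t$ (up to the independent uniform draws in the discrete case), and it is already standard normal by construction, so $\eta_t^* = P_t^*$.

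With this identification in hand, I would plug into equation \eqref{newform0} of Theorem \ref{zexo}. The $Z_t^*$ term drops out and $\eta_t^*$ is replaced by $P_t^*$, yielding
\begin{equation*}
Y_t = \beta_0 + \beta^T X_t + \alpha P_t + \theta_{P_t^*} P_t^* + \xi_t,
\end{equation*}
which matches \eqref{newformendo} verbatim. From the proof of Theorem \ref{zexo} I inherit the identification $\theta_{\eta_t^*} = \sigma_{\epsilon} \rho_{\eta^* \epsilon^*}$, which in this specialization becomes $\theta_{P_t^*} = \sigma_{\epsilon} \rho_{P^* \epsilon^*}$. In particular, $\theta_{P_t^*} = 0$ if and only if $\rho_{P^* \epsilon^*} = 0$.

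To close the equivalence, I would invoke Proposition \ref{correlcoro} when $P_t$ is continuous and Proposition \ref{correldisceret} when $P_t$ is discrete; each asserts that $\rho_{P \epsilon} = 0$ is equivalent to $\rho_{P^* \epsilon^*} = 0$. Composing the two equivalences gives the conclusion: exogeneity of $P_t$ (that is, $\rho_{P \epsilon} = 0$) is tantamount to $\theta_{P_t^*} = 0$ in the augmented regression, which is a standard OLS significance test on the coefficient of the generated regressor $P_t^*$.

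The only delicate point, and where I would be most careful, is the substitution $\eta_t = P_t^*$: without instruments the reduced-form residual is not pinned down by a first-stage regression, so one must justify that adopting $P_t^*$ as the residual is consistent with the Gaussian copula Assumption \ref{assum1} in its two-variable form, namely that the joint distribution of $(P_t, \epsilon_t)$ is a Gaussian copula, equivalently that $(P_t^*, \epsilon_t^*)$ is jointly normal. Once that interpretive step is articulated, the remainder of the argument is a direct syntactic transplant of the Cholesky-decomposition derivation already carried out in the proof of Theorem \ref{zexo}, with $m = 0$.
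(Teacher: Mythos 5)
Your proposal is correct and follows essentially the same route as the paper: specialize Theorem \ref{zexo} to the instrument-free case by setting $\eta_t = P_t^*$ (so that $\eta_t^* = P_t^*$), read off $\theta_{P_t^*} = \sigma_{\epsilon}\,\rho_{P^* \epsilon^*}$, and close the equivalence via Propositions \ref{correlcoro} and \ref{correldisceret}. Your explicit flagging of the delicate step --- justifying that $P_t^*$ can legitimately play the role of the reduced-form residual under the two-variable Gaussian copula assumption --- is a point the paper passes over more quickly, but the argument itself is the same.
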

\begin{proof}
Per Theorem \ref{zexo}, \(\theta_{P_t^*} = \sigma_{\epsilon} ~\rho_{P^* \epsilon^*}\). Propositions \ref{correlcoro} and \ref{correldisceret} establish that \(\rho_{P \epsilon} = 0 \Leftrightarrow \rho_{P \epsilon^*} = 0\). Hence, testing for the exogeneity of \(P_t\) (i.e., \(\rho_{P \epsilon} = 0\)) is equivalent to testing \(\theta_{P_t^*} = 0\) in \eqref{newformendo}.
\end{proof}

\section{Simulation Studies}\label{simstudies}
In this section, we conduct simulation studies to demonstrate the efficacy of our copula-based tests in assessing the exogeneity of both instruments and regressors.
\subsection{Performance of the Exogeneity Test for the Instrumental Variables}\label{sim1}
We examine a scenario where both \(X_t\) and \(P_t\) are \(1 \times 1\) vectors, with three available instrumental variables for the endogenous variable \(P_t\). The data-generating process adheres to the Gaussian copula model described in \ref{copula1}. Initially, \(Z_t^*\), \(\eta_t^*\), \(\epsilon_t^*\), and \(X_t^*\) are generated using a multivariate normal distribution with a correlation matrix described below. Next, for each variable \(\kappa\) in the set \(\{Z_t, \eta_t, \epsilon_t, X_t\}\), we define \(\kappa_t = F_{\kappa}^{-1}(\kappa_t^*)\), where the CDF function $F_{\kappa}(\cdot)$ for each variable corresponds to the distributions specified below . The simulation parameters are as follows:
\begin{itemize}
    \item Except for the first instrument \(Z_{1t}\), which follows a Student-t distribution with two degrees of freedom, and the error term that may have a non-normal distribution, the rest of the variables ($X_t, P_t, \eta_t, Z_{2t}, Z_{3t}$) follow a standard normal distribution.
    \item The following distributions are considered for the error term \(\epsilon_t\): $N(0,1)$, a student-t with $df=2$, a uniform distribution between -0.5 and 0.5, an exponential distribution with a mean of 1, and a Beta distribution with both shape and scale parameters equal to 0.5. 
    \item Two sample sizes are evaluated: \(T = 200\) and \(T = 1,000\).
    \item Given the exogeneity of $X_t$, we assume $\rho_{X^* \epsilon^*}=0$.
    \item Four scenarios are considered for the endogeneity of the instrumental variables:
    \begin{itemize}
        \item Scenario 1: \(\rho_{Z^*\epsilon^*} = [0.0, 0.0, 0.0]\).
        \item Scenario 2: \(\rho_{Z^*\epsilon^*} = [0.0, 0.5, 0.0]\).
        \item Scenario 3: \(\rho_{Z^*\epsilon^*} = [0.3, 0.5, 0.0]\).
        \item Scenario 4: \(\rho_{Z^*\epsilon^*} = [0.3, 0.5, 0.7]\).
    \end{itemize}
    \item The covariance matrix \(\Sigma_{Z^*}\) is defined as:
    \[\Sigma_{Z^*} = \left[\begin{array}{ccc}
    1 & 0.2 & 0.3 \\
    0.2 & 1 & 0.4 \\
    0.3 & 0.4 & 1
    \end{array}\right].\]
    \item A correlation of \(\rho_{\eta^* \epsilon^*} = 0.5\) is assumed.
    \item Correlation between the exogenous variable and the instruments is modeled as \(\rho_{X^* Z^*} = [0.2, 0.2, 0.2]\).
    \item The endogenous variable is set as \(P_t = 1 + 0.1 X_t + 0.1 Z_{1t} + 0.2 Z_{2t} + 0.3 Z_{3t} + \eta_t\).
    \item The dependent variable is modeled as \(Y_t = 1 + 0.3 X_t + P_t + \epsilon_t\).
\end{itemize}

For each scenario, we conduct 100 simulations and apply the Wald test for the correlation factor of each instrument as stipulated in Theorem \ref{zexo}. The proportion of instances where the Wald test rejects the null hypothesis (that the correlation between the instrument and error term is zero) is recorded. Additionally, the correlation between \(P_t\) and \(\epsilon_t\) is analyzed to assess the degree of endogeneity. The results are summarized in Tables \ref{zexotable1} and \ref{zexotable3} at the 5\% significance level for $T=200$ and $T=1,000$, respectively. We summarize the main insights as follows:
\begin{itemize}
    \item If the instrument is not correlated with the error term, our test rejects the exogeneity hypothesis on average 5.8\% of the times for $T=200$ and 7\% of the times for $T=1,000$ at the 5\% significance level. In other words, the type-I error of our test is less than 6\% for $T=200$ and 7\% for $T=1,000$.
    \item If the instrument is correlated with the error term, our test rejects the exogeneity hypothesis on average 76.6\% of the times for $T=200$ and 98.8\% of the times for $T=1,000$ at the 5\% significance level. In other words, the type-II error of our test (not rejecting the exogeneity hypothesis when the instruments are correlated with the error term) is less than 30\% for $T=200$ and less than 2\% for $T=1,000$.
    \item Even though our model is developed based on the normality assumption for the error term, our test provides robust performance for non-normal error terms. 
\end{itemize}
We present the simulation results for the 1\% significance level in Appendix \ref{zexoappendix}. The insights are similar to the case with the 5\% significance level. 

\begin{table}[H]
\scriptsize.
\begin{centering}
\begin{tabular}{|c|c|c|c|c|c|c|c|}
\hline
Distribution   of $\epsilon$     & $\rho_{Z_1^*   \epsilon^*}$ & $\rho_{Z_2^*   \epsilon^*}$ & $\rho_{Z_3^*   \epsilon^*}$ & $\rho_{P   \epsilon}$ & \begin{tabular}[c]{@{}c@{}}\% of times   \\ $\rho_{Z_1 \epsilon}=0$ \\ is rejected\end{tabular} & \begin{tabular}[c]{@{}c@{}}\% of times \\  $\rho_{Z_2  \epsilon}=0$\\  is rejected\end{tabular} & \begin{tabular}[c]{@{}c@{}}\% of times \\ $\rho_{Z_3 \epsilon}=0$\\  is rejected\end{tabular} \\ \hline
\multirow{4}{*}{$N(0,1)$}        & 0.00                        & 0.00                        & 0.00                        & 0.43                  & 7\%                                                                                             & 4\%                                                                                             & 6\%                                                                                           \\ \cline{2-8} 
                                 & 0.00                        & 0.50                        & 0.00                        & 0.50                  & 5\%                                                                                             & 81\%                                                                                            & 2\%                                                                                           \\ \cline{2-8} 
                                 & 0.30                        & 0.50                        & 0.00                        & 0.57                  & 62\%                                                                                            & 79\%                                                                                            & 5\%                                                                                           \\ \cline{2-8} 
                                 & 0.30                        & 0.50                        & 0.70                        & 0.74                  & 86\%                                                                                            & 96\%                                                                                            & 99\%                                                                                          \\ \hline
\multirow{4}{*}{$t(2)$}          & 0.00                        & 0.00                        & 0.00                        & 0.36                  & 5\%                                                                                             & 6\%                                                                                             & 8\%                                                                                           \\ \cline{2-8} 
                                 & 0.00                        & 0.50                        & 0.00                        & 0.43                  & 7\%                                                                                             & 56\%                                                                                            & 6\%                                                                                           \\ \cline{2-8} 
                                 & 0.30                        & 0.50                        & 0.00                        & 0.48                  & 51\%                                                                                            & 77\%                                                                                            & 15\%                                                                                          \\ \cline{2-8} 
                                 & 0.30                        & 0.50                        & 0.70                        & 0.63                  & 49\%                                                                                            & 71\%                                                                                            & 80\%                                                                                          \\ \hline
\multirow{4}{*}{$U(-0.5,0.5)$}   & 0.00                        & 0.00                        & 0.00                        & 0.42                  & 4\%                                                                                             & 8\%                                                                                             & 9\%                                                                                           \\ \cline{2-8} 
                                 & 0.00                        & 0.50                        & 0.00                        & 0.50                  & 5\%                                                                                             & 75\%                                                                                            & 10\%                                                                                          \\ \cline{2-8} 
                                 & 0.30                        & 0.50                        & 0.00                        & 0.56                  & 64\%                                                                                            & 83\%                                                                                            & 7\%                                                                                           \\ \cline{2-8} 
                                 & 0.30                        & 0.50                        & 0.70                        & 0.73                  & 80\%                                                                                            & 95\%                                                                                            & 97\%                                                                                          \\ \hline
\multirow{4}{*}{$EXP(1)$}        & 0.00                        & 0.00                        & 0.00                        & 0.39                  & 4\%                                                                                             & 5\%                                                                                             & 6\%                                                                                           \\ \cline{2-8} 
                                 & 0.00                        & 0.50                        & 0.00                        & 0.47                  & 4\%                                                                                             & 70\%                                                                                            & 0\%                                                                                           \\ \cline{2-8} 
                                 & 0.30                        & 0.50                        & 0.00                        & 0.51                  & 54\%                                                                                            & 75\%                                                                                            & 13\%                                                                                          \\ \cline{2-8} 
                                 & 0.30                        & 0.50                        & 0.70                        & 0.68                  & 67\%                                                                                            & 89\%                                                                                            & 94\%                                                                                          \\ \hline
\multirow{4}{*}{$BETA(0.5,0.5)$} & 0.00                        & 0.00                        & 0.00                        & 0.41                  & 3\%                                                                                             & 6\%                                                                                             & 3\%                                                                                           \\ \cline{2-8} 
                                 & 0.00                        & 0.50                        & 0.00                        & 0.49                  & 5\%                                                                                             & 72\%                                                                                            & 4\%                                                                                           \\ \cline{2-8} 
                                 & 0.30                        & 0.50                        & 0.00                        & 0.54                  & 49\%                                                                                            & 74\%                                                                                            & 1\%                                                                                           \\ \cline{2-8} 
                                 & 0.30                        & 0.50                        & 0.70                        & 0.71                  & 77\%                                                                                            & 95\%                                                                                            & 100\%                                                                                         \\ \hline
\end{tabular}
\caption{The results of the copula-based approach to test the exogeneity of instruments for different distributions of the error term ($T=200$, 5\% significance level). The last three columns on the right show the percentage of times out of 100 simulation trials that the exogeneity hypothesis is rejected. }
\label{zexotable1}
\end{centering}
\end{table}
\begin{table}[H]
\scriptsize.
\begin{centering}
\begin{tabular}{|c|c|c|c|c|c|c|c|}
\hline
Distribution   of $\epsilon$     & $\rho_{Z_1^*   \epsilon^*}$ & $\rho_{Z_2^*   \epsilon^*}$ & $\rho_{Z_3^*   \epsilon^*}$ & $\rho_{P   \epsilon}$ & \begin{tabular}[c]{@{}c@{}}\% of times   \\ $\rho_{Z_1 \epsilon}=0$ \\ is rejected\end{tabular} & \begin{tabular}[c]{@{}c@{}}\% of times \\  $\rho_{Z_2  \epsilon}=0$\\  is rejected\end{tabular} & \begin{tabular}[c]{@{}c@{}}\% of times \\ $\rho_{Z_3 \epsilon}=0$\\  is rejected\end{tabular} \\ \hline
\multirow{4}{*}{$N(0,1)$}        & 0.00                        & 0.00                        & 0.00                        & 0.42                  & 5\%                                                                                             & 5\%                                                                                             & 6\%                                                                                           \\ \cline{2-8} 
                                 & 0.00                        & 0.50                        & 0.00                        & 0.51                  & 10\%                                                                                            & 100\%                                                                                           & 5\%                                                                                           \\ \cline{2-8} 
                                 & 0.30                        & 0.50                        & 0.00                        & 0.57                  & 100\%                                                                                           & 100\%                                                                                           & 9\%                                                                                           \\ \cline{2-8} 
                                 & 0.30                        & 0.50                        & 0.70                        & 0.75                  & 100\%                                                                                           & 100\%                                                                                           & 100\%                                                                                         \\ \hline
\multirow{4}{*}{$t(2)$}          & 0.00                        & 0.00                        & 0.00                        & 0.33                  & 2\%                                                                                             & 0\%                                                                                             & 2\%                                                                                           \\ \cline{2-8} 
                                 & 0.00                        & 0.50                        & 0.00                        & 0.39                  & 2\%                                                                                             & 97\%                                                                                            & 2\%                                                                                           \\ \cline{2-8} 
                                 & 0.30                        & 0.50                        & 0.00                        & 0.46                  & 86\%                                                                                            & 97\%                                                                                            & 23\%                                                                                          \\ \cline{2-8} 
                                 & 0.30                        & 0.50                        & 0.70                        & 0.59                  & 92\%                                                                                            & 97\%                                                                                            & 98\%                                                                                          \\ \hline
\multirow{4}{*}{$U(-0.5,0.5)$}   & 0.00                        & 0.00                        & 0.00                        & 0.41                  & 6\%                                                                                             & 4\%                                                                                             & 6\%                                                                                           \\ \cline{2-8} 
                                 & 0.00                        & 0.50                        & 0.00                        & 0.49                  & 9\%                                                                                             & 100\%                                                                                           & 11\%                                                                                          \\ \cline{2-8} 
                                 & 0.30                        & 0.50                        & 0.00                        & 0.55                  & 100\%                                                                                           & 100\%                                                                                           & 15\%                                                                                          \\ \cline{2-8} 
                                 & 0.30                        & 0.50                        & 0.70                        & 0.73                  & 100\%                                                                                           & 100\%                                                                                           & 100\%                                                                                         \\ \hline
\multirow{4}{*}{$EXP(1)$}        & 0.00                        & 0.00                        & 0.00                        & 0.39                  & 4\%                                                                                             & 5\%                                                                                             & 6\%                                                                                           \\ \cline{2-8} 
                                 & 0.00                        & 0.50                        & 0.00                        & 0.46                  & 7\%                                                                                             & 100\%                                                                                           & 12\%                                                                                          \\ \cline{2-8} 
                                 & 0.30                        & 0.50                        & 0.00                        & 0.51                  & 97\%                                                                                            & 99\%                                                                                            & 14\%                                                                                          \\ \cline{2-8} 
                                 & 0.30                        & 0.50                        & 0.70                        & 0.67                  & 100\%                                                                                           & 100\%                                                                                           & 100\%                                                                                         \\ \hline
\multirow{4}{*}{$BETA(0.5,0.5)$} & 0.00                        & 0.00                        & 0.00                        & 0.41                  & 5\%                                                                                             & 5\%                                                                                             & 4\%                                                                                           \\ \cline{2-8} 
                                 & 0.00                        & 0.50                        & 0.00                        & 0.48                  & 8\%                                                                                             & 100\%                                                                                           & 4\%                                                                                           \\ \cline{2-8} 
                                 & 0.30                        & 0.50                        & 0.00                        & 0.54                  & 100\%                                                                                           & 100\%                                                                                           & 13\%                                                                                          \\ \cline{2-8} 
                                 & 0.30                        & 0.50                        & 0.70                        & 0.71                  & 100\%                                                                                           & 100\%                                                                                           & 100\%                                                                                         \\ \hline
\end{tabular}
\caption{The results of the copula-based approach to test the exogeneity of instruments for different distributions of the error term ($T=1,000$, 5\% significance level). The last three columns on the right show the percentage of times out of 100 simulation trials that the exogeneity hypothesis is rejected.}
\label{zexotable3}
\end{centering}
\end{table}


\subsection{Performance of the Exogeneity Test for the Regressors}\label{sim2}
To evaluate the performance of our test for regressor exogeneity, we generate the standard normal transformations of \(P_t\), \(X_t\), \(\epsilon_t\), and \(Z_t\) based on a specified correlation matrix described below using a multivariate normal distribution, subsequently creating the variables using their inverse CDF functions. We consider only one instrumental variable. This instrumental variable may have a potential correlation with the error term. We utilize this instrument to conduct the Hausman endogeneity test, comparing its outcomes with our instrument-free copula-based endogeneity test. The simulation setup is detailed as follows:
\begin{itemize}
  \item We assume that the endogenous variable $P_t$ has a student-t distribution with $df=2$.
  \item The exogenous variable \(X_t\) and the instrument \(Z_t\) follow standard normal distributions.
  \item The following distributions are considered for the error term \(\epsilon_t\): $N(0,1)$, a student-t with $df=2$, a uniform distribution between -0.5 and 0.5, an exponential distribution with mean of 1, and a beta distribution with both shape and scale parameters equal to 0.5. 
  \item Two scenarios are evaluated for the correlation between the instrument and the error term: \(\rho_{Z^* \epsilon^*} \in \{0, 0.2\}\).
  \item Nine different levels of endogeneity for \(P_t\) are considered: \(\rho_{P^* \epsilon^*} \in \{-0.5, -0.25, -0.1, -0.05, 0\\, 0.05, 0.1, 0.25, 0.5\}\).
  \item Correlations of \(\rho_{X^*P^*} = 0.2\) and \(\rho_{X^* Z^*} = 0.2\) are assumed to model potential correlations between the exogenous variable and the instrument. Given the exogeneity of $X_t$, we assume $\rho_{X^* \epsilon^*}=0$.
  \item The dependent variable is set as \(Y_t = 1 + 0.3 X_t + P_t + \epsilon_t\).
\end{itemize}
Tables \ref{tex01} and \ref{tex03} show the regressor endogeneity results for the case with $\rho_{Z^* \epsilon^*}=0$ at the 5\% significance level for $T=200$ and $T=1,000$, respectively, for various scenarios using both the instrument-free copula-based method and the Hausman approach. For brevity, we present the results for the case with $\rho_{Z^* \epsilon^*}=0$ at the 1\% significance level for $T=200$ and $T=1,000$, and the case with $\rho_{Z^* \epsilon^*}=0.2$ at the 5\% significance level for $T=1,000$ in Appendix \ref{appendixregressor}.
\begin{table}[]
\scriptsize.
\begin{centering}
\begin{tabular}{|c|c|c|c|c|}
\hline
Distribution   of $\epsilon$   & $\rho_{P^* \epsilon^*}$ & \begin{tabular}[c]{@{}c@{}}Average  $\rho_{P \epsilon}$ \\  in the data\end{tabular} & \begin{tabular}[c]{@{}c@{}}\% of times $\rho_{P \epsilon=0}$ \\  rejected, Copula method\end{tabular} & \begin{tabular}[c]{@{}c@{}}\% of times  $\rho_{P \epsilon=0}$  \\ rejected, Hausman method\end{tabular} \\ \hline
\multirow{9}{*}{$N(0,1)$}      & -0.5                & -0.43                                                                           & 90\%                                                                                          & 85\%                                                                                           \\ \cline{2-5} 
                               & -0.25               & -0.22                                                                           & 31\%                                                                                          & 32\%                                                                                           \\ \cline{2-5} 
                               & -0.1                & -0.08                                                                           & 10\%                                                                                          & 14\%                                                                                           \\ \cline{2-5} 
                               & -0.05               & -0.03                                                                           & 2\%                                                                                           & 7\%                                                                                            \\ \cline{2-5} 
                               & 0                   & 0.00                                                                            & 3\%                                                                                           & 2\%                                                                                            \\ \cline{2-5} 
                               & 0.05                & 0.04                                                                            & 7\%                                                                                           & 6\%                                                                                            \\ \cline{2-5} 
                               & 0.1                 & 0.08                                                                            & 8\%                                                                                           & 12\%                                                                                           \\ \cline{2-5} 
                               & 0.25                & 0.21                                                                            & 28\%                                                                                          & 37\%                                                                                           \\ \cline{2-5} 
                               & 0.5                 & 0.41                                                                            & 87\%                                                                                          & 75\%                                                                                           \\ \hline
\multirow{9}{*}{t(2)}        & -0.5                & -0.40                                                                           & 50\%                                                                                          & 76\%                                                                                           \\ \cline{2-5} 
                               & -0.25               & -0.19                                                                           & 23\%                                                                                          & 26\%                                                                                           \\ \cline{2-5} 
                               & -0.1                & -0.07                                                                           & 6\%                                                                                           & 6\%                                                                                            \\ \cline{2-5} 
                               & -0.05               & -0.04                                                                           & 1\%                                                                                           & 4\%                                                                                            \\ \cline{2-5} 
                               & 0                   & 0.01                                                                            & 1\%                                                                                           & 3\%                                                                                            \\ \cline{2-5} 
                               & 0.05                & 0.02                                                                            & 2\%                                                                                           & 2\%                                                                                            \\ \cline{2-5} 
                               & 0.1                 & 0.09                                                                            & 4\%                                                                                           & 9\%                                                                                            \\ \cline{2-5} 
                               & 0.25                & 0.18                                                                            & 26\%                                                                                          & 25\%                                                                                           \\ \cline{2-5} 
                               & 0.5                 & 0.40                                                                            & 50\%                                                                                          & 73\%                                                                                           \\ \hline
\multirow{9}{*}{U(-0.5,0.5)}   & -0.5                & -0.40                                                                           & 94\%                                                                                          & 75\%                                                                                           \\ \cline{2-5} 
                               & -0.25               & -0.21                                                                           & 38\%                                                                                          & 25\%                                                                                           \\ \cline{2-5} 
                               & -0.1                & -0.08                                                                           & 9\%                                                                                           & 13\%                                                                                           \\ \cline{2-5} 
                               & -0.05               & -0.04                                                                           & 6\%                                                                                           & 10\%                                                                                           \\ \cline{2-5} 
                               & 0                   & -0.01                                                                           & 2\%                                                                                           & 3\%                                                                                            \\ \cline{2-5} 
                               & 0.05                & 0.05                                                                            & 6\%                                                                                           & 6\%                                                                                            \\ \cline{2-5} 
                               & 0.1                 & 0.08                                                                            & 6\%                                                                                           & 8\%                                                                                            \\ \cline{2-5} 
                               & 0.25                & 0.20                                                                            & 36\%                                                                                          & 38\%                                                                                           \\ \cline{2-5} 
                               & 0.5                 & 0.41                                                                            & 93\%                                                                                          & 80\%                                                                                           \\ \hline
\multirow{9}{*}{EXP(1)}        & -0.5                & -0.39                                                                           & 67\%                                                                                          & 74\%                                                                                           \\ \cline{2-5} 
                               & -0.25               & -0.19                                                                           & 28\%                                                                                          & 23\%                                                                                           \\ \cline{2-5} 
                               & -0.1                & -0.06                                                                           & 2\%                                                                                           & 8\%                                                                                            \\ \cline{2-5} 
                               & -0.05               & -0.04                                                                           & 3\%                                                                                           & 1\%                                                                                            \\ \cline{2-5} 
                               & 0                   & 0.00                                                                            & 5\%                                                                                           & 7\%                                                                                            \\ \cline{2-5} 
                               & 0.05                & 0.04                                                                            & 1\%                                                                                           & 9\%                                                                                            \\ \cline{2-5} 
                               & 0.1                 & 0.07                                                                            & 9\%                                                                                           & 6\%                                                                                            \\ \cline{2-5} 
                               & 0.25                & 0.21                                                                            & 21\%                                                                                          & 39\%                                                                                           \\ \cline{2-5} 
                               & 0.5                 & 0.39                                                                            & 68\%                                                                                          & 72\%                                                                                           \\ \hline
\multirow{9}{*}{BETA(0.5,0.5)} & -0.5                & -0.39                                                                           & 94\%                                                                                          & 77\%                                                                                           \\ \cline{2-5} 
                               & -0.25               & -0.21                                                                           & 27\%                                                                                          & 32\%                                                                                           \\ \cline{2-5} 
                               & -0.1                & -0.10                                                                           & 7\%                                                                                           & 14\%                                                                                           \\ \cline{2-5} 
                               & -0.05               & -0.03                                                                           & 3\%                                                                                           & 3\%                                                                                            \\ \cline{2-5} 
                               & 0                   & 0.00                                                                            & 2\%                                                                                           & 4\%                                                                                            \\ \cline{2-5} 
                               & 0.05                & 0.02                                                                            & 4\%                                                                                           & 4\%                                                                                            \\ \cline{2-5} 
                               & 0.1                 & 0.09                                                                            & 4\%                                                                                           & 9\%                                                                                            \\ \cline{2-5} 
                               & 0.25                & 0.20                                                                            & 39\%                                                                                          & 26\%                                                                                           \\ \cline{2-5} 
                               & 0.5                 & 0.38                                                                            & 94\%                                                                                          & 74\%                                                                                           \\ \hline
\end{tabular}
\caption{The regressor exogeneity test results for $T=200$ and $\rho_{Z^* \epsilon^*}=0$ at the 5\% significance level. The last two columns on the right show the percentage of times out of 100 simulation trials that the exogeneity hypothesis is rejected.}
\label{tex01}
\end{centering}
\end{table}
\begin{table}[]
\scriptsize.
\begin{centering}
\begin{tabular}{|c|c|c|c|c|}
\hline
Distribution   of $\epsilon$   & $\rho_{P^* \epsilon^*}$ & \begin{tabular}[c]{@{}c@{}}Average  $\rho_{P \epsilon}$ \\  in the data\end{tabular} & \begin{tabular}[c]{@{}c@{}}\% of times $\rho_{P \epsilon=0}$ \\  rejected, Copula method\end{tabular} & \begin{tabular}[c]{@{}c@{}}\% of times  $\rho_{P \epsilon=0}$  \\ rejected, Hausman method\end{tabular} \\ \hline
\multirow{9}{*}{$N(0,1)$}      & -0.5                & -0.40                                                                           & 100\%                                                                                         & 96\%                                                                                           \\ \cline{2-5} 
                               & -0.25               & -0.20                                                                           & 98\%                                                                                          & 73\%                                                                                           \\ \cline{2-5} 
                               & -0.1                & -0.08                                                                           & 35\%                                                                                          & 22\%                                                                                           \\ \cline{2-5} 
                               & -0.05               & -0.04                                                                           & 11\%                                                                                          & 15\%                                                                                           \\ \cline{2-5} 
                               & 0                   & 0.00                                                                            & 1\%                                                                                           & 1\%                                                                                            \\ \cline{2-5} 
                               & 0.05                & 0.04                                                                            & 16\%                                                                                          & 14\%                                                                                           \\ \cline{2-5} 
                               & 0.1                 & 0.08                                                                            & 47\%                                                                                          & 19\%                                                                                           \\ \cline{2-5} 
                               & 0.25                & 0.20                                                                            & 97\%                                                                                          & 78\%                                                                                           \\ \cline{2-5} 
                               & 0.5                 & 0.40                                                                            & 100\%                                                                                         & 96\%                                                                                           \\ \hline
\multirow{9}{*}{t(2)}        & -0.5                & -0.35                                                                           & 81\%                                                                                          & 88\%                                                                                           \\ \cline{2-5} 
                               & -0.25               & -0.15                                                                           & 82\%                                                                                          & 50\%                                                                                           \\ \cline{2-5} 
                               & -0.1                & -0.06                                                                           & 39\%                                                                                          & 8\%                                                                                            \\ \cline{2-5} 
                               & -0.05               & -0.03                                                                           & 15\%                                                                                          & 10\%                                                                                           \\ \cline{2-5} 
                               & 0                   & 0.00                                                                            & 2\%                                                                                           & 3\%                                                                                            \\ \cline{2-5} 
                               & 0.05                & 0.03                                                                            & 13\%                                                                                          & 3\%                                                                                            \\ \cline{2-5} 
                               & 0.1                 & 0.06                                                                            & 28\%                                                                                          & 16\%                                                                                           \\ \cline{2-5} 
                               & 0.25                & 0.17                                                                            & 77\%                                                                                          & 62\%                                                                                           \\ \cline{2-5} 
                               & 0.5                 & 0.33                                                                            & 80\%                                                                                          & 83\%                                                                                           \\ \hline
\multirow{9}{*}{U(-0.5,0.5)}   & -0.5                & -0.37                                                                           & 100\%                                                                                         & 90\%                                                                                           \\ \cline{2-5} 
                               & -0.25               & -0.19                                                                           & 100\%                                                                                         & 71\%                                                                                           \\ \cline{2-5} 
                               & -0.1                & -0.07                                                                           & 45\%                                                                                          & 12\%                                                                                           \\ \cline{2-5} 
                               & -0.05               & -0.04                                                                           & 19\%                                                                                          & 7\%                                                                                            \\ \cline{2-5} 
                               & 0                   & 0.00                                                                            & 8\%                                                                                           & 5\%                                                                                            \\ \cline{2-5} 
                               & 0.05                & 0.04                                                                            & 16\%                                                                                          & 10\%                                                                                           \\ \cline{2-5} 
                               & 0.1                 & 0.07                                                                            & 47\%                                                                                          & 13\%                                                                                           \\ \cline{2-5} 
                               & 0.25                & 0.18                                                                            & 99\%                                                                                          & 66\%                                                                                           \\ \cline{2-5} 
                               & 0.5                 & 0.36                                                                            & 100\%                                                                                         & 84\%                                                                                           \\ \hline
\multirow{9}{*}{EXP(1)}        & -0.5                & -0.35                                                                           & 100\%                                                                                         & 85\%                                                                                           \\ \cline{2-5} 
                               & -0.25               & -0.18                                                                           & 96\%                                                                                          & 72\%                                                                                           \\ \cline{2-5} 
                               & -0.1                & -0.07                                                                           & 40\%                                                                                          & 24\%                                                                                           \\ \cline{2-5} 
                               & -0.05               & -0.04                                                                           & 7\%                                                                                           & 5\%                                                                                            \\ \cline{2-5} 
                               & 0                   & 0.00                                                                            & 2\%                                                                                           & 4\%                                                                                            \\ \cline{2-5} 
                               & 0.05                & 0.03                                                                            & 15\%                                                                                          & 7\%                                                                                            \\ \cline{2-5} 
                               & 0.1                 & 0.07                                                                            & 37\%                                                                                          & 20\%                                                                                           \\ \cline{2-5} 
                               & 0.25                & 0.18                                                                            & 94\%                                                                                          & 58\%                                                                                           \\ \cline{2-5} 
                               & 0.5                 & 0.36                                                                            & 99\%                                                                                          & 90\%                                                                                           \\ \hline
\multirow{9}{*}{BETA(0.5,0.5)} & -0.5                & -0.35                                                                           & 100\%                                                                                         & 90\%                                                                                           \\ \cline{2-5} 
                               & -0.25               & -0.18                                                                           & 100\%                                                                                         & 73\%                                                                                           \\ \cline{2-5} 
                               & -0.1                & -0.07                                                                           & 41\%                                                                                          & 17\%                                                                                           \\ \cline{2-5} 
                               & -0.05               & -0.03                                                                           & 9\%                                                                                           & 5\%                                                                                            \\ \cline{2-5} 
                               & 0                   & 0.00                                                                            & 3\%                                                                                           & 0\%                                                                                            \\ \cline{2-5} 
                               & 0.05                & 0.04                                                                            & 13\%                                                                                          & 9\%                                                                                            \\ \cline{2-5} 
                               & 0.1                 & 0.08                                                                            & 38\%                                                                                          & 17\%                                                                                           \\ \cline{2-5} 
                               & 0.25                & 0.19                                                                            & 98\%                                                                                          & 71\%                                                                                           \\ \cline{2-5} 
                               & 0.5                 & 0.35                                                                            & 100\%                                                                                         & 86\%                                                                                           \\ \hline
\end{tabular}
\caption{The regressor exogeneity test results for $T=1,000$ and $\rho_{Z^* \epsilon^*}=0$ at the 5\% significance level. The last two columns on the right show the percentage of times out of 100 simulation trials that the exogeneity hypothesis is rejected.}
\label{tex03}
\end{centering}
\end{table}
Key insights drawn from these simulation results are summarized below:
\begin{itemize}
    \item Relative to the traditional Hausman approach, the copula-based method exhibits greater accuracy. Specifically, it shows a higher frequency of correctly rejecting the exogeneity hypothesis when the regressor is not exogenous and a similar or lower frequency of incorrectly rejecting the hypothesis when the regressor is exogenous.
    \item Both the copula-based and Hausman approaches demonstrate significantly improved performance with an increased number of observations (\(T = 200\) versus \(T = 1,000\)).
    \item Our test shows a robust performance even for non-normal error terms.
    \item As illustrated in Table \ref{tex07} in Appendix \ref{appendixregressor}, the effectiveness of the Hausman approach diminishes in scenarios where the instrument correlates with the error term, often erroneously rejecting the exogeneity hypothesis even for exogenous regressors. Conversely, as the copula-based approach is not reliant on instruments, its performance remains unaffected by the endogeneity of the instrument.
   \item The type-I error of our approach (rejecting the exogeneity hypothesis when the regressor is exogenous) is less than 3\% for both $T=200$ and $T=1,000$.
   \item The type-II error of our approach (not rejecting the exogeneity when the regressor is endogenous) depends on the degree of endogeneity as follows:
   \begin{itemize}
       \item For $|\rho_{Z^* \epsilon^*}=0.5|$, it is on average less than 21\% for $T=200$ and less than 5\% for $T=1,000$.
       \item For $|\rho_{Z^* \epsilon^*}=0.25|$, it is on average less than 70\% for $T=200$ and less than 6\% for $T=1,000$.
       \item For $0<|\rho_{Z^* \epsilon^*}| \leq 0.1$, it is on average less than 93\% for $T=200$ and less than 74\% for $T=1,000$.
   \end{itemize}
\end{itemize}
We emphasize that even though the type-II error of the copula based test is high for low levels of endogeneity, still our copula based method outperforms the Hausman endogeneity test approach. 

\section{Empirical Application: Instrumental Variables in Education and Earnings}

This section evaluates the performance of our exogeneity tests in the context of Angrist's seminal paper on instrumental variables (\cite{angristcompulsory}). The paper examines the impact of education duration on earnings, addressing the endogeneity issue arising from the unobserved ability of a student influencing both education duration and wage. The authors employ a two-stage-least-square (TSLS) method to assess the effect of education (\(E_i\)) on the log of wage (\(\ln W_i\)), using quarter of birth dummies interacted with year of birth dummies as instruments.

We apply our exogeneity tests to the endogenous variable (education length, \(EDUC\)) and the 30 instrumental variables (\(QTR120-QTR129\), \(QTR220-QTR229\), and \(QTR320-QTR329\)) for men born between 1920 and 1929. Our analysis mirrors the TSLS results presented in Table IV of \cite{angristcompulsory}, specifically for Cases 2, 4, 6, and 8, described as follows:
\begin{itemize}
    \item Case 2: \(X_i\) includes year of birth.
    \item Case 4: \(X_i\) includes year of birth, age, and age-squared.
    \item Case 6: \(X_i\) includes year of birth, race dummies, a dummy for residence in SMSA, a marital status dummy, and eight region-of-residence dummies.
    \item Case 8: \(X_i\) includes year of birth, age and age-squared, race dummies, a dummy for residence in SMSA, a marital status dummy, and eight region-of-residence dummies.
\end{itemize}
Considering the discrete nature of \(EDUC\) and the instrumental variables, and the randomness in the normal distribution transformation for discrete variables, we execute our test 100 times with different random draws for the normal distribution transformations. We then calculate the frequency, out of 100 trials, at which the exogeneity hypothesis is rejected at the 5\% significance level. The results are shown in Table \ref{ang1}. We provide the results for the 1\% significance level in Appendix \ref{angappendix}.

\begin{table}[h]
\scriptsize.
\begin{centering}
\begin{tabular}{|c|c|c|c|c|}
\hline
\% of   times the exogeneity hypothesis is rejected at 0.05 & Case 2 & Case 4 & Case 6 & Case 8 \\ \hline
EDUC                                                        & 77\%   & 80\%   & 6\%    & 2\%    \\ \hline
QTR120                                                      & 1\%    & 3\%    & 5\%    & 3\%    \\ \hline
QTR121                                                      & 7\%    & 3\%    & 7\%    & 5\%    \\ \hline
QTR122                                                      & 4\%    & 5\%    & 11\%   & 2\%    \\ \hline
QTR123                                                      & 3\%    & 4\%    & 6\%    & 5\%    \\ \hline
QTR124                                                      & 4\%    & 4\%    & 2\%    & 1\%    \\ \hline
QTR125                                                      & 1\%    & 6\%    & 4\%    & 4\%    \\ \hline
QTR126                                                      & 0\%    & 3\%    & 8\%    & 5\%    \\ \hline
QTR127                                                      & 7\%    & 1\%    & 10\%   & 4\%    \\ \hline
QTR128                                                      & 4\%    & 8\%    & 4\%    & 11\%   \\ \hline
QTR129                                                      & 20\%   & 8\%    & 15\%   & 11\%   \\ \hline
QTR220                                                      & 4\%    & 2\%    & 6\%    & 8\%    \\ \hline
QTR221                                                      & 5\%    & 4\%    & 3\%    & 3\%    \\ \hline
QTR222                                                      & 2\%    & 3\%    & 6\%    & 4\%    \\ \hline
QTR223                                                      & 2\%    & 4\%    & 5\%    & 6\%    \\ \hline
QTR224                                                      & 1\%    & 6\%    & 4\%    & 6\%    \\ \hline
QTR225                                                      & 9\%    & 2\%    & 10\%   & 7\%    \\ \hline
QTR226                                                      & 6\%    & 4\%    & 2\%    & 9\%    \\ \hline
QTR227                                                      & 4\%    & 5\%    & 8\%    & 6\%    \\ \hline
QTR228                                                      & 3\%    & 4\%    & 4\%    & 7\%    \\ \hline
QTR229                                                      & 3\%    & 2\%    & 5\%    & 6\%    \\ \hline
QTR320                                                      & 4\%    & 1\%    & 7\%    & 2\%    \\ \hline
QTR321                                                      & 4\%    & 5\%    & 4\%    & 10\%   \\ \hline
QTR322                                                      & 11\%   & 5\%    & 7\%    & 9\%    \\ \hline
QTR323                                                      & 4\%    & 6\%    & 7\%    & 3\%    \\ \hline
QTR324                                                      & 3\%    & 5\%    & 3\%    & 5\%    \\ \hline
QTR325                                                      & 3\%    & 10\%   & 4\%    & 2\%    \\ \hline
QTR326                                                      & 4\%    & 2\%    & 9\%    & 7\%    \\ \hline
QTR327                                                      & 2\%    & 3\%    & 3\%    & 2\%    \\ \hline
QTR328                                                      & 6\%    & 5\%    & 9\%    & 6\%    \\ \hline
QTR329                                                      & 5\%    & 3\%    & 5\%    & 8\%    \\ \hline
\end{tabular}
\caption{The results of the exogeneity test for the endogenous variable and the instruments in \cite{angristcompulsory} at the 5\% significance level. The table shows the percentage of times out of 100 simulation trials that the exogeneity hypothesis is rejected.}
\label{ang1}
\end{centering}
\end{table}
As evidenced in Table \ref{ang1}, for cases 2 and 4, the exogeneity hypothesis of the endogenous variable \(EDUC\) is rejected in over 77\% of the simulations, strongly indicating the endogeneity of the education length variable. However, this apparent endogeneity measure diminishes to 6\% and 2\% in cases 6 and 8, where demographic variables are included, suggesting that these demographic factors account for the portions of the error term correlated with \(EDUC\).

Furthermore, Table \ref{ang1} indicates that for all instrumental variables across all cases, the exogeneity hypothesis is rejected on average 5\% of the time. This consistently low rejection rate provides robust evidence supporting the exogeneity of the instruments.
\section{Conclusion}
In conclusion, our study addresses a pivotal challenge in causal inference using instrumental variables – the testing of the exogeneity condition, which has long been a contentious and unresolved issue in applied econometrics. We propose a novel Copula-based method, a significant departure from traditional practices that largely relied on economic-theoretical justifications or untestable assumptions. Our approach, grounded in modeling the joint distribution of error terms and variables using a Gaussian copula, marks a substantial advancement in verifying the exogeneity of both instruments and regressors.

Our findings, derived from extensive simulation studies, demonstrate the robustness and accuracy of our test. Notably, the instrument exogeneity test shows high efficacy, correctly rejecting exogenous or endogenous instruments in most cases, even under non-normal error term distributions. Furthermore, our regressor exogeneity test outperforms the Hausman test, providing more reliable results without the need for exogenous instruments.

The empirical application of our test in a TSLS setting, using Angrist's study on compulsory education, further validates its practical utility. We successfully confirmed the endogeneity of the education variable sing an instrument-free approach. We also confirmed and the exogeneity of the birth year and quarter interaction instruments, exemplifying the test's applicability in real-world scenarios.

Our work makes a significant contribution to the field of applied econometrics by introducing a reliable and easy-to-implement approach for testing exogeneity. This advancement enhances the rigor and reliability of econometric analyses and offers substantial benefits to researchers and practitioners in the field. It opens new avenues for conducting more accurate and credible empirical research, paving the way for informed econometric analysis across various applied contexts. 
\bibliographystyle{apalike}
\bibliography{sample}

\begin{thebibliography}{}

\bibitem[Angrist et~al., 1996]{angrist1996identification}
Angrist, J.~D., Imbens, G.~W., and Rubin, D.~B. (1996).
\newblock Identification of causal effects using instrumental variables.
\newblock {\em Journal of the American statistical Association}, 91(434):444--455.

\bibitem[Angrist and Krueger, 1991]{angristcompulsory}
Angrist, J.~D. and Krueger, A.~B. (1991).
\newblock Does compulsory school attendance affect schooling and earnings?
\newblock {\em The Quarterly Journal of Economics}, 106(4):979--1014.

\bibitem[Ashley, 2009]{ashley2009assessing}
Ashley, R. (2009).
\newblock Assessing the credibility of instrumental variables inference with imperfect instruments via sensitivity analysis.
\newblock {\em Journal of Applied Econometrics}, 24(2):325--337.

\bibitem[Bound et~al., 1995]{bound1995problems}
Bound, J., Jaeger, D.~A., and Baker, R.~M. (1995).
\newblock Problems with instrumental variables estimation when the correlation between the instruments and the endogenous explanatory variable is weak.
\newblock {\em Journal of the American statistical association}, 90(430):443--450.

\bibitem[Caetano, 2015]{caetano2015test}
Caetano, C. (2015).
\newblock A test of exogeneity without instrumental variables in models with bunching.
\newblock {\em Econometrica}, 83(4):1581--1600.

\bibitem[Deaton, 2010]{deaton2010instruments}
Deaton, A. (2010).
\newblock Instruments, randomization, and learning about development.
\newblock {\em Journal of economic literature}, 48(2):424--455.

\bibitem[DeGroot and Schervish, 2011]{degroot2011probability}
DeGroot, M.~H. and Schervish, M.~J. (2011).
\newblock {\em Probability and Statistics}.
\newblock Pearson, 4 edition.

\bibitem[Dufour, 2003]{dufour2003identification}
Dufour, J.-M. (2003).
\newblock Identification, weak instruments, and statistical inference in econometrics.
\newblock {\em Canadian Journal of Economics/Revue canadienne d'{\'e}conomique}, 36(4):767--808.

\bibitem[Durbin, 1954]{durbin1954errors}
Durbin, J. (1954).
\newblock Errors in variables. review of the international statistical institute 22: 23--32.

\bibitem[Ebbes et~al., 2005]{ebbes2005solving}
Ebbes, P., Wedel, M., B{\"o}ckenholt, U., and Steerneman, T. (2005).
\newblock Solving and testing for regressor-error (in) dependence when no instrumental variables are available: With new evidence for the effect of education on income.
\newblock {\em Quantitative Marketing and Economics}, 3:365--392.

\bibitem[Hansen, 1982]{hansen1982large}
Hansen, L.~P. (1982).
\newblock Large sample properties of generalized method of moments estimators.
\newblock {\em Econometrica: Journal of the econometric society}, pages 1029--1054.

\bibitem[Hausman, 1978]{hausman1978specification}
Hausman, J.~A. (1978).
\newblock Specification tests in econometrics.
\newblock {\em Econometrica: Journal of the econometric society}, pages 1251--1271.

\bibitem[Kitagawa, 2015]{kitagawa2015test}
Kitagawa, T. (2015).
\newblock A test for instrument validity.
\newblock {\em Econometrica}, 83(5):2043--2063.

\bibitem[Kiviet, 2020]{kiviet2020testing}
Kiviet, J.~F. (2020).
\newblock Testing the impossible: Identifying exclusion restrictions.
\newblock {\em Journal of Econometrics}, 218(2):294--316.

\bibitem[Kleibergen and Zivot, 2003]{kleibergen2003bayesian}
Kleibergen, F. and Zivot, E. (2003).
\newblock Bayesian and classical approaches to instrumental variable regression.
\newblock {\em Journal of Econometrics}, 114(1):29--72.

\bibitem[Kraay, 2012]{kraay2012instrumental}
Kraay, A. (2012).
\newblock Instrumental variables regressions with uncertain exclusion restrictions: a bayesian approach.
\newblock {\em Journal of Applied Econometrics}, 27(1):108--128.

\bibitem[Kripfganz and Kiviet, 2021]{kripfganz2021kinkyreg}
Kripfganz, S. and Kiviet, J.~F. (2021).
\newblock kinkyreg: Instrument-free inference for linear regression models with endogenous regressors.
\newblock {\em The Stata Journal}, 21(3):772--813.

\bibitem[Nevo and Rosen, 2012]{nevo2012identification}
Nevo, A. and Rosen, A.~M. (2012).
\newblock Identification with imperfect instruments.
\newblock {\em Review of Economics and Statistics}, 94(3):659--671.

\bibitem[Parente and Silva, 2012]{parente2012cautionary}
Parente, P.~M. and Silva, J.~S. (2012).
\newblock A cautionary note on tests of overidentifying restrictions.
\newblock {\em Economics Letters}, 115(2):314--317.

\bibitem[Park and Gupta, 2012]{park2012handling}
Park, S. and Gupta, S. (2012).
\newblock Handling endogenous regressors by joint estimation using copulas.
\newblock {\em Marketing Science}, 31(4):567--586.

\bibitem[Rossi and Allenby, 2003]{rossi2003bayesian}
Rossi, P.~E. and Allenby, G.~M. (2003).
\newblock Bayesian statistics and marketing.
\newblock {\em Marketing Science}, 22(3):304--328.

\bibitem[Sargan, 1958]{sargan1958estimation}
Sargan, J.~D. (1958).
\newblock The estimation of economic relationships using instrumental variables.
\newblock {\em Econometrica: Journal of the econometric society}, pages 393--415.

\bibitem[Sklar, 1959]{sklar1959fonctions}
Sklar, M. (1959).
\newblock Fonctions de r{\'e}partition {\`a} n dimensions et leurs marges.
\newblock In {\em Annales de l'ISUP}, volume~8, pages 229--231.

\bibitem[Wooldridge, 2010]{wooldridge2010econometric}
Wooldridge, J.~M. (2010).
\newblock {\em Econometric analysis of cross section and panel data}.
\newblock MIT press.

\bibitem[Wu, 1973]{wu1973alternative}
Wu, D.-M. (1973).
\newblock Alternative tests of independence between stochastic regressors and disturbances.
\newblock {\em Econometrica: journal of the Econometric Society}, pages 733--750.

\bibitem[Yang et~al., 2022]{yang2022addressing}
Yang, F., Qian, Y., and Xie, H. (2022).
\newblock Addressing endogeneity using a two-stage copula generated regressor approach.
\newblock Technical report, National Bureau of Economic Research.

\end{thebibliography}

\appendix
\section{Additional Simulation Results for the Instrument Exogeneity Test}\label{zexoappendix}

\begin{table}[H]\label{zexotable4}
\scriptsize.
\begin{centering}
\begin{tabular}{|c|c|c|c|c|c|c|c|}
\hline
Distribution   of $\epsilon$     & $\rho_{Z_1^*   \epsilon^*}$ & $\rho_{Z_2^*   \epsilon^*}$ & $\rho_{Z_3^*   \epsilon^*}$ & $\rho_{P   \epsilon}$ & \begin{tabular}[c]{@{}c@{}}\% of times   \\ $\rho_{Z_1 \epsilon}=0$ \\ is rejected\end{tabular} & \begin{tabular}[c]{@{}c@{}}\% of times \\  $\rho_{Z_2  \epsilon}=0$\\  is rejected\end{tabular} & \begin{tabular}[c]{@{}c@{}}\% of times \\ $\rho_{Z_3 \epsilon}=0$\\  is rejected\end{tabular} \\ \hline
\multirow{4}{*}{$N(0,1)$}        & 0.00                        & 0.00                        & 0.00                        & 0.43                  & 0\%                                                                                             & 1\%                                                                                             & 1\%                                                                                           \\ \cline{2-8} 
                                 & 0.00                        & 0.50                        & 0.00                        & 0.50                  & 0\%                                                                                             & 66\%                                                                                            & 1\%                                                                                           \\ \cline{2-8} 
                                 & 0.30                        & 0.50                        & 0.00                        & 0.57                  & 45\%                                                                                            & 65\%                                                                                            & 2\%                                                                                           \\ \cline{2-8} 
                                 & 0.30                        & 0.50                        & 0.70                        & 0.74                  & 76\%                                                                                            & 89\%                                                                                            & 96\%                                                                                          \\ \hline
\multirow{4}{*}{$t(2)$}          & 0.00                        & 0.00                        & 0.00                        & 0.36                  & 2\%                                                                                             & 2\%                                                                                             & 2\%                                                                                           \\ \cline{2-8} 
                                 & 0.00                        & 0.50                        & 0.00                        & 0.43                  & 1\%                                                                                             & 38\%                                                                                            & 2\%                                                                                           \\ \cline{2-8} 
                                 & 0.30                        & 0.50                        & 0.00                        & 0.48                  & 30\%                                                                                            & 68\%                                                                                            & 12\%                                                                                          \\ \cline{2-8} 
                                 & 0.30                        & 0.50                        & 0.70                        & 0.63                  & 36\%                                                                                            & 61\%                                                                                            & 66\%                                                                                          \\ \hline
\multirow{4}{*}{$U(-0.5,0.5)$}   & 0.00                        & 0.00                        & 0.00                        & 0.42                  & 0\%                                                                                             & 0\%                                                                                             & 0\%                                                                                           \\ \cline{2-8} 
                                 & 0.00                        & 0.50                        & 0.00                        & 0.50                  & 2\%                                                                                             & 66\%                                                                                            & 3\%                                                                                           \\ \cline{2-8} 
                                 & 0.30                        & 0.50                        & 0.00                        & 0.56                  & 44\%                                                                                            & 73\%                                                                                            & 1\%                                                                                           \\ \cline{2-8} 
                                 & 0.30                        & 0.50                        & 0.70                        & 0.73                  & 65\%                                                                                            & 88\%                                                                                            & 93\%                                                                                          \\ \hline
\multirow{4}{*}{$EXP(1)$}        & 0.00                        & 0.00                        & 0.00                        & 0.39                  & 2\%                                                                                             & 2\%                                                                                             & 3\%                                                                                           \\ \cline{2-8} 
                                 & 0.00                        & 0.50                        & 0.00                        & 0.47                  & 0\%                                                                                             & 62\%                                                                                            & 0\%                                                                                           \\ \cline{2-8} 
                                 & 0.30                        & 0.50                        & 0.00                        & 0.51                  & 35\%                                                                                            & 56\%                                                                                            & 7\%                                                                                           \\ \cline{2-8} 
                                 & 0.30                        & 0.50                        & 0.70                        & 0.68                  & 43\%                                                                                            & 77\%                                                                                            & 84\%                                                                                          \\ \hline
\multirow{4}{*}{$BETA(0.5,0.5)$} & 0.00                        & 0.00                        & 0.00                        & 0.41                  & 0\%                                                                                             & 1\%                                                                                             & 0\%                                                                                           \\ \cline{2-8} 
                                 & 0.00                        & 0.50                        & 0.00                        & 0.49                  & 1\%                                                                                             & 60\%                                                                                            & 1\%                                                                                           \\ \cline{2-8} 
                                 & 0.30                        & 0.50                        & 0.00                        & 0.54                  & 32\%                                                                                            & 63\%                                                                                            & 1\%                                                                                           \\ \cline{2-8} 
                                 & 0.30                        & 0.50                        & 0.70                        & 0.71                  & 62\%                                                                                            & 88\%                                                                                            & 92\%                                                                                          \\ \hline
\end{tabular}
\caption{The results of the copula-based approach to test the exogeneity of instruments for different distributions of the error term ($T=200$, 1\% significance level). The last three columns on the right show the percentage of times out of 100 simulation trials that the exogeneity hypothesis is rejected.}
\end{centering}
\end{table}
\begin{table}[H]\label{zexotable6}
\scriptsize.
\begin{centering}
\begin{tabular}{|c|c|c|c|c|c|c|c|}
\hline
Distribution   of $\epsilon$     & $\rho_{Z_1^*   \epsilon^*}$ & $\rho_{Z_2^*   \epsilon^*}$ & $\rho_{Z_3^*   \epsilon^*}$ & $\rho_{P   \epsilon}$ & \begin{tabular}[c]{@{}c@{}}\% of times   \\ $\rho_{Z_1 \epsilon}=0$ \\ is rejected\end{tabular} & \begin{tabular}[c]{@{}c@{}}\% of times \\  $\rho_{Z_2  \epsilon}=0$\\  is rejected\end{tabular} & \begin{tabular}[c]{@{}c@{}}\% of times \\ $\rho_{Z_3 \epsilon}=0$\\  is rejected\end{tabular} \\ \hline
\multirow{4}{*}{$N(0,1)$}        & 0.00                        & 0.00                        & 0.00                        & 0.42                  & 0\%                                                                                             & 0\%                                                                                             & 2\%                                                                                           \\ \cline{2-8} 
                                 & 0.00                        & 0.50                        & 0.00                        & 0.51                  & 3\%                                                                                             & 100\%                                                                                           & 3\%                                                                                           \\ \cline{2-8} 
                                 & 0.30                        & 0.50                        & 0.00                        & 0.57                  & 100\%                                                                                           & 100\%                                                                                           & 2\%                                                                                           \\ \cline{2-8} 
                                 & 0.30                        & 0.50                        & 0.70                        & 0.75                  & 100\%                                                                                           & 100\%                                                                                           & 100\%                                                                                         \\ \hline
\multirow{4}{*}{$t(2)$}          & 0.00                        & 0.00                        & 0.00                        & 0.33                  & 0\%                                                                                             & 0\%                                                                                             & 0\%                                                                                           \\ \cline{2-8} 
                                 & 0.00                        & 0.50                        & 0.00                        & 0.39                  & 1\%                                                                                             & 96\%                                                                                            & 1\%                                                                                           \\ \cline{2-8} 
                                 & 0.30                        & 0.50                        & 0.00                        & 0.46                  & 78\%                                                                                            & 96\%                                                                                            & 9\%                                                                                           \\ \cline{2-8} 
                                 & 0.30                        & 0.50                        & 0.70                        & 0.59                  & 83\%                                                                                            & 94\%                                                                                            & 97\%                                                                                          \\ \hline
\multirow{4}{*}{$U(-0.5,0.5)$}   & 0.00                        & 0.00                        & 0.00                        & 0.41                  & 0\%                                                                                             & 1\%                                                                                             & 0\%                                                                                           \\ \cline{2-8} 
                                 & 0.00                        & 0.50                        & 0.00                        & 0.49                  & 3\%                                                                                             & 100\%                                                                                           & 3\%                                                                                           \\ \cline{2-8} 
                                 & 0.30                        & 0.50                        & 0.00                        & 0.55                  & 100\%                                                                                           & 100\%                                                                                           & 8\%                                                                                           \\ \cline{2-8} 
                                 & 0.30                        & 0.50                        & 0.70                        & 0.73                  & 100\%                                                                                           & 100\%                                                                                           & 100\%                                                                                         \\ \hline
\multirow{4}{*}{$EXP(1)$}        & 0.00                        & 0.00                        & 0.00                        & 0.39                  & 0\%                                                                                             & 0\%                                                                                             & 1\%                                                                                           \\ \cline{2-8} 
                                 & 0.00                        & 0.50                        & 0.00                        & 0.46                  & 2\%                                                                                             & 100\%                                                                                           & 3\%                                                                                           \\ \cline{2-8} 
                                 & 0.30                        & 0.50                        & 0.00                        & 0.51                  & 94\%                                                                                            & 99\%                                                                                            & 6\%                                                                                           \\ \cline{2-8} 
                                 & 0.30                        & 0.50                        & 0.70                        & 0.67                  & 99\%                                                                                            & 100\%                                                                                           & 100\%                                                                                         \\ \hline
\multirow{4}{*}{$BETA(0.5,0.5)$} & 0.00                        & 0.00                        & 0.00                        & 0.41                  & 0\%                                                                                             & 0\%                                                                                             & 0\%                                                                                           \\ \cline{2-8} 
                                 & 0.00                        & 0.50                        & 0.00                        & 0.48                  & 1\%                                                                                             & 100\%                                                                                           & 0\%                                                                                           \\ \cline{2-8} 
                                 & 0.30                        & 0.50                        & 0.00                        & 0.54                  & 100\%                                                                                           & 100\%                                                                                           & 3\%                                                                                           \\ \cline{2-8} 
                                 & 0.30                        & 0.50                        & 0.70                        & 0.71                  & 100\%                                                                                           & 100\%                                                                                           & 100\%                                                                                         \\ \hline
\end{tabular}
\caption{The results of the copula-based approach to test the exogeneity of instruments for different distributions of the error term ($T=1,000$, 1\% significance level). The last three columns on the right show the percentage of times out of 100 simulation trials that the exogeneity hypothesis is rejected.}
\end{centering}
\end{table}

\section{Additional Simulation Results for the Regressor Endogeneity Test}\label{appendixregressor}
\begin{table}[H]
\scriptsize.
\begin{centering}
\begin{tabular}{|c|c|c|c|c|}
\hline
Distribution   of $\epsilon$   & $\rho_{P^* \epsilon^*}$ & \begin{tabular}[c]{@{}c@{}}Average  $\rho_{P \epsilon}$ \\  in the data\end{tabular} & \begin{tabular}[c]{@{}c@{}}\% of times $\rho_{P \epsilon=0}$ \\  rejected, Copula method\end{tabular} & \begin{tabular}[c]{@{}c@{}}\% of times  $\rho_{P \epsilon=0}$  \\ rejected, Hausman method\end{tabular} \\ \hline
\multirow{9}{*}{$N(0,1)$}      & -0.5                & -0.43                                                                           & 67\%                                                                                          & 70\%                                                                                           \\ \cline{2-5} 
                               & -0.25               & -0.22                                                                           & 21\%                                                                                          & 16\%                                                                                           \\ \cline{2-5} 
                               & -0.1                & -0.08                                                                           & 1\%                                                                                           & 3\%                                                                                            \\ \cline{2-5} 
                               & -0.05               & -0.03                                                                           & 1\%                                                                                           & 2\%                                                                                            \\ \cline{2-5} 
                               & 0                   & 0.00                                                                            & 2\%                                                                                           & 1\%                                                                                            \\ \cline{2-5} 
                               & 0.05                & 0.04                                                                            & 2\%                                                                                           & 0\%                                                                                            \\ \cline{2-5} 
                               & 0.1                 & 0.08                                                                            & 4\%                                                                                           & 4\%                                                                                            \\ \cline{2-5} 
                               & 0.25                & 0.21                                                                            & 13\%                                                                                          & 12\%                                                                                           \\ \cline{2-5} 
                               & 0.5                 & 0.41                                                                            & 73\%                                                                                          & 62\%                                                                                           \\ \hline
\multirow{9}{*}{t(2)}        & -0.5                & -0.40                                                                           & 27\%                                                                                          & 55\%                                                                                           \\ \cline{2-5} 
                               & -0.25               & -0.19                                                                           & 9\%                                                                                           & 10\%                                                                                           \\ \cline{2-5} 
                               & -0.1                & -0.07                                                                           & 0\%                                                                                           & 1\%                                                                                            \\ \cline{2-5} 
                               & -0.05               & -0.04                                                                           & 0\%                                                                                           & 0\%                                                                                            \\ \cline{2-5} 
                               & 0                   & 0.01                                                                            & 0\%                                                                                           & 0\%                                                                                            \\ \cline{2-5} 
                               & 0.05                & 0.02                                                                            & 0\%                                                                                           & 0\%                                                                                            \\ \cline{2-5} 
                               & 0.1                 & 0.09                                                                            & 0\%                                                                                           & 1\%                                                                                            \\ \cline{2-5} 
                               & 0.25                & 0.18                                                                            & 9\%                                                                                           & 13\%                                                                                           \\ \cline{2-5} 
                               & 0.5                 & 0.40                                                                            & 34\%                                                                                          & 56\%                                                                                           \\ \hline
\multirow{9}{*}{U(-0.5,0.5)}   & -0.5                & -0.40                                                                           & 88\%                                                                                          & 56\%                                                                                           \\ \cline{2-5} 
                               & -0.25               & -0.21                                                                           & 19\%                                                                                          & 13\%                                                                                           \\ \cline{2-5} 
                               & -0.1                & -0.08                                                                           & 4\%                                                                                           & 3\%                                                                                            \\ \cline{2-5} 
                               & -0.05               & -0.04                                                                           & 0\%                                                                                           & 3\%                                                                                            \\ \cline{2-5} 
                               & 0                   & -0.01                                                                           & 0\%                                                                                           & 1\%                                                                                            \\ \cline{2-5} 
                               & 0.05                & 0.05                                                                            & 1\%                                                                                           & 2\%                                                                                            \\ \cline{2-5} 
                               & 0.1                 & 0.08                                                                            & 2\%                                                                                           & 5\%                                                                                            \\ \cline{2-5} 
                               & 0.25                & 0.20                                                                            & 17\%                                                                                          & 19\%                                                                                           \\ \cline{2-5} 
                               & 0.5                 & 0.41                                                                            & 89\%                                                                                          & 58\%                                                                                           \\ \hline
\multirow{9}{*}{EXP(1)}        & -0.5                & -0.39                                                                           & 52\%                                                                                          & 58\%                                                                                           \\ \cline{2-5} 
                               & -0.25               & -0.19                                                                           & 15\%                                                                                          & 9\%                                                                                            \\ \cline{2-5} 
                               & -0.1                & -0.06                                                                           & 0\%                                                                                           & 2\%                                                                                            \\ \cline{2-5} 
                               & -0.05               & -0.04                                                                           & 0\%                                                                                           & 1\%                                                                                            \\ \cline{2-5} 
                               & 0                   & 0.00                                                                            & 2\%                                                                                           & 2\%                                                                                            \\ \cline{2-5} 
                               & 0.05                & 0.04                                                                            & 0\%                                                                                           & 1\%                                                                                            \\ \cline{2-5} 
                               & 0.1                 & 0.07                                                                            & 4\%                                                                                           & 1\%                                                                                            \\ \cline{2-5} 
                               & 0.25                & 0.21                                                                            & 7\%                                                                                           & 11\%                                                                                           \\ \cline{2-5} 
                               & 0.5                 & 0.39                                                                            & 49\%                                                                                          & 52\%                                                                                           \\ \hline
\multirow{9}{*}{BETA(0.5,0.5)} & -0.5                & -0.39                                                                           & 84\%                                                                                          & 62\%                                                                                           \\ \cline{2-5} 
                               & -0.25               & -0.21                                                                           & 13\%                                                                                          & 17\%                                                                                           \\ \cline{2-5} 
                               & -0.1                & -0.10                                                                           & 1\%                                                                                           & 5\%                                                                                            \\ \cline{2-5} 
                               & -0.05               & -0.03                                                                           & 0\%                                                                                           & 0\%                                                                                            \\ \cline{2-5} 
                               & 0                   & 0.00                                                                            & 0\%                                                                                           & 0\%                                                                                            \\ \cline{2-5} 
                               & 0.05                & 0.02                                                                            & 2\%                                                                                           & 1\%                                                                                            \\ \cline{2-5} 
                               & 0.1                 & 0.09                                                                            & 1\%                                                                                           & 3\%                                                                                            \\ \cline{2-5} 
                               & 0.25                & 0.20                                                                            & 19\%                                                                                          & 16\%                                                                                           \\ \cline{2-5} 
                               & 0.5                 & 0.38                                                                            & 85\%                                                                                          & 52\%                                                                                           \\ \hline
\end{tabular}
\caption{The regressor exogeneity test results for $T=200$ and $\rho_{Z^* \epsilon^*}=0$ at the 1\% significance level. The last two columns on the right show the percentage of times out of 100 simulation trials that the exogeneity hypothesis is rejected.}
\label{tex04}
\end{centering}
\end{table}
\begin{table}[H]
\scriptsize.
\begin{centering}
\begin{tabular}{|c|c|c|c|c|}
\hline
Distribution   of $\epsilon$   & $\rho_{P^* \epsilon^*}$ & \begin{tabular}[c]{@{}c@{}}Average  $\rho_{P \epsilon}$ \\  in the data\end{tabular} & \begin{tabular}[c]{@{}c@{}}\% of times $\rho_{P \epsilon=0}$ \\  rejected, Copula method\end{tabular} & \begin{tabular}[c]{@{}c@{}}\% of times  $\rho_{P \epsilon=0}$  \\ rejected, Hausman method\end{tabular} \\ \hline
\multirow{9}{*}{$N(0,1)$}      & -0.5                & -0.40                                                                           & 100\%                                                                                         & 90\%                                                                                           \\ \cline{2-5} 
                               & -0.25               & -0.20                                                                           & 94\%                                                                                          & 55\%                                                                                           \\ \cline{2-5} 
                               & -0.1                & -0.08                                                                           & 16\%                                                                                          & 3\%                                                                                            \\ \cline{2-5} 
                               & -0.05               & -0.04                                                                           & 3\%                                                                                           & 7\%                                                                                            \\ \cline{2-5} 
                               & 0                   & 0.00                                                                            & 0\%                                                                                           & 0\%                                                                                            \\ \cline{2-5} 
                               & 0.05                & 0.04                                                                            & 4\%                                                                                           & 5\%                                                                                            \\ \cline{2-5} 
                               & 0.1                 & 0.08                                                                            & 24\%                                                                                          & 4\%                                                                                            \\ \cline{2-5} 
                               & 0.25                & 0.20                                                                            & 92\%                                                                                          & 57\%                                                                                           \\ \cline{2-5} 
                               & 0.5                 & 0.40                                                                            & 100\%                                                                                         & 86\%                                                                                           \\ \hline
\multirow{9}{*}{t(2)}        & -0.5                & -0.35                                                                           & 74\%                                                                                          & 78\%                                                                                           \\ \cline{2-5} 
                               & -0.25               & -0.15                                                                           & 73\%                                                                                          & 35\%                                                                                           \\ \cline{2-5} 
                               & -0.1                & -0.06                                                                           & 21\%                                                                                          & 1\%                                                                                            \\ \cline{2-5} 
                               & -0.05               & -0.03                                                                           & 6\%                                                                                           & 2\%                                                                                            \\ \cline{2-5} 
                               & 0                   & 0.00                                                                            & 1\%                                                                                           & 1\%                                                                                            \\ \cline{2-5} 
                               & 0.05                & 0.03                                                                            & 1\%                                                                                           & 2\%                                                                                            \\ \cline{2-5} 
                               & 0.1                 & 0.06                                                                            & 10\%                                                                                          & 6\%                                                                                            \\ \cline{2-5} 
                               & 0.25                & 0.17                                                                            & 60\%                                                                                          & 40\%                                                                                           \\ \cline{2-5} 
                               & 0.5                 & 0.33                                                                            & 70\%                                                                                          & 71\%                                                                                           \\ \hline
\multirow{9}{*}{U(-0.5,0.5)}   & -0.5                & -0.35                                                                           & 74\%                                                                                          & 78\%                                                                                           \\ \cline{2-5} 
                               & -0.25               & -0.15                                                                           & 73\%                                                                                          & 35\%                                                                                           \\ \cline{2-5} 
                               & -0.1                & -0.06                                                                           & 21\%                                                                                          & 1\%                                                                                            \\ \cline{2-5} 
                               & -0.05               & -0.03                                                                           & 6\%                                                                                           & 2\%                                                                                            \\ \cline{2-5} 
                               & 0                   & 0.00                                                                            & 1\%                                                                                           & 1\%                                                                                            \\ \cline{2-5} 
                               & 0.05                & 0.03                                                                            & 1\%                                                                                           & 2\%                                                                                            \\ \cline{2-5} 
                               & 0.1                 & 0.06                                                                            & 10\%                                                                                          & 6\%                                                                                            \\ \cline{2-5} 
                               & 0.25                & 0.17                                                                            & 60\%                                                                                          & 40\%                                                                                           \\ \cline{2-5} 
                               & 0.5                 & 0.33                                                                            & 70\%                                                                                          & 71\%                                                                                           \\ \hline
\multirow{9}{*}{EXP(1)}        & -0.5                & -0.35                                                                           & 99\%                                                                                          & 82\%                                                                                           \\ \cline{2-5} 
                               & -0.25               & -0.18                                                                           & 87\%                                                                                          & 51\%                                                                                           \\ \cline{2-5} 
                               & -0.1                & -0.07                                                                           & 18\%                                                                                          & 8\%                                                                                            \\ \cline{2-5} 
                               & -0.05               & -0.04                                                                           & 2\%                                                                                           & 1\%                                                                                            \\ \cline{2-5} 
                               & 0                   & 0.00                                                                            & 0\%                                                                                           & 1\%                                                                                            \\ \cline{2-5} 
                               & 0.05                & 0.03                                                                            & 6\%                                                                                           & 1\%                                                                                            \\ \cline{2-5} 
                               & 0.1                 & 0.07                                                                            & 17\%                                                                                          & 5\%                                                                                            \\ \cline{2-5} 
                               & 0.25                & 0.18                                                                            & 85\%                                                                                          & 35\%                                                                                           \\ \cline{2-5} 
                               & 0.5                 & 0.36                                                                            & 99\%                                                                                          & 79\%                                                                                           \\ \hline
\multirow{9}{*}{BETA(0.5,0.5)} & -0.5                & -0.35                                                                           & 100\%                                                                                         & 79\%                                                                                           \\ \cline{2-5} 
                               & -0.25               & -0.18                                                                           & 98\%                                                                                          & 54\%                                                                                           \\ \cline{2-5} 
                               & -0.1                & -0.07                                                                           & 18\%                                                                                          & 3\%                                                                                            \\ \cline{2-5} 
                               & -0.05               & -0.03                                                                           & 2\%                                                                                           & 0\%                                                                                            \\ \cline{2-5} 
                               & 0                   & 0.00                                                                            & 1\%                                                                                           & 0\%                                                                                            \\ \cline{2-5} 
                               & 0.05                & 0.04                                                                            & 3\%                                                                                           & 3\%                                                                                            \\ \cline{2-5} 
                               & 0.1                 & 0.08                                                                            & 21\%                                                                                          & 4\%                                                                                            \\ \cline{2-5} 
                               & 0.25                & 0.19                                                                            & 94\%                                                                                          & 47\%                                                                                           \\ \cline{2-5} 
                               & 0.5                 & 0.35                                                                            & 100\%                                                                                         & 82\%                                                                                           \\ \hline
\end{tabular}
\caption{The regressor exogeneity test results for $T=1,000$ and $\rho_{Z^* \epsilon^*}=0$ at the 1\% significance level. The last two columns on the right show the percentage of times out of 100 simulation trials that the exogeneity hypothesis is rejected.}
\label{tex06}
\end{centering}
\end{table}

\begin{table}[H]
\scriptsize.
\begin{centering}
\begin{tabular}{|c|c|c|c|c|}
\hline
Distribution   of $\epsilon$   & $\rho_{P^* \epsilon^*}$ & \begin{tabular}[c]{@{}c@{}}Average  $\rho_{P \epsilon}$ \\  in the data\end{tabular} & \begin{tabular}[c]{@{}c@{}}\% of times $\rho_{P \epsilon=0}$ \\  rejected, Copula method\end{tabular} & \begin{tabular}[c]{@{}c@{}}\% of times  $\rho_{P \epsilon=0}$  \\ rejected, Hausman method\end{tabular} \\ \hline
\multirow{9}{*}{$N(0,1)$}      & -0.5                & -0.38                                                                           & 100\%                                                                                         & 100\%                                                                                          \\ \cline{2-5} 
                               & -0.25               & -0.19                                                                           & 100\%                                                                                         & 100\%                                                                                          \\ \cline{2-5} 
                               & -0.1                & -0.08                                                                           & 35\%                                                                                          & 100\%                                                                                          \\ \cline{2-5} 
                               & -0.05               & -0.04                                                                           & 17\%                                                                                          & 100\%                                                                                          \\ \cline{2-5} 
                               & 0                   & 0.00                                                                            & 4\%                                                                                           & 100\%                                                                                          \\ \cline{2-5} 
                               & 0.05                & 0.03                                                                            & 15\%                                                                                          & 100\%                                                                                          \\ \cline{2-5} 
                               & 0.1                 & 0.08                                                                            & 47\%                                                                                          & 100\%                                                                                          \\ \cline{2-5} 
                               & 0.25                & 0.20                                                                            & 98\%                                                                                          & 98\%                                                                                           \\ \cline{2-5} 
                               & 0.5                 & 0.39                                                                            & 100\%                                                                                         & 20\%                                                                                           \\ \hline
\multirow{9}{*}{t(2)}        & -0.5                & -0.34                                                                           & 86\%                                                                                          & 100\%                                                                                          \\ \cline{2-5} 
                               & -0.25               & -0.15                                                                           & 84\%                                                                                          & 98\%                                                                                           \\ \cline{2-5} 
                               & -0.1                & -0.06                                                                           & 31\%                                                                                          & 100\%                                                                                          \\ \cline{2-5} 
                               & -0.05               & -0.03                                                                           & 8\%                                                                                           & 100\%                                                                                          \\ \cline{2-5} 
                               & 0                   & 0.00                                                                            & 6\%                                                                                           & 98\%                                                                                           \\ \cline{2-5} 
                               & 0.05                & 0.03                                                                            & 9\%                                                                                           & 97\%                                                                                           \\ \cline{2-5} 
                               & 0.1                 & 0.06                                                                            & 27\%                                                                                          & 99\%                                                                                           \\ \cline{2-5} 
                               & 0.25                & 0.15                                                                            & 76\%                                                                                          & 86\%                                                                                           \\ \cline{2-5} 
                               & 0.5                 & 0.35                                                                            & 78\%                                                                                          & 5\%                                                                                            \\ \hline
\multirow{9}{*}{U(-0.5,0.5)}   & -0.5                & -0.38                                                                           & 100\%                                                                                         & 100\%                                                                                          \\ \cline{2-5} 
                               & -0.25               & -0.19                                                                           & 99\%                                                                                          & 100\%                                                                                          \\ \cline{2-5} 
                               & -0.1                & -0.07                                                                           & 40\%                                                                                          & 100\%                                                                                          \\ \cline{2-5} 
                               & -0.05               & -0.04                                                                           & 10\%                                                                                          & 100\%                                                                                          \\ \cline{2-5} 
                               & 0                   & 0.00                                                                            & 4\%                                                                                           & 100\%                                                                                          \\ \cline{2-5} 
                               & 0.05                & 0.04                                                                            & 16\%                                                                                          & 100\%                                                                                          \\ \cline{2-5} 
                               & 0.1                 & 0.07                                                                            & 43\%                                                                                          & 100\%                                                                                          \\ \cline{2-5} 
                               & 0.25                & 0.19                                                                            & 100\%                                                                                         & 100\%                                                                                          \\ \cline{2-5} 
                               & 0.5                 & 0.37                                                                            & 100\%                                                                                         & 19\%                                                                                           \\ \hline
\multirow{9}{*}{EXP(1)}        & -0.5                & -0.36                                                                           & 100\%                                                                                         & 100\%                                                                                          \\ \cline{2-5} 
                               & -0.25               & -0.17                                                                           & 94\%                                                                                          & 100\%                                                                                          \\ \cline{2-5} 
                               & -0.1                & -0.07                                                                           & 35\%                                                                                          & 100\%                                                                                          \\ \cline{2-5} 
                               & -0.05               & -0.03                                                                           & 12\%                                                                                          & 100\%                                                                                          \\ \cline{2-5} 
                               & 0                   & 0.00                                                                            & 4\%                                                                                           & 100\%                                                                                          \\ \cline{2-5} 
                               & 0.05                & 0.04                                                                            & 14\%                                                                                          & 100\%                                                                                          \\ \cline{2-5} 
                               & 0.1                 & 0.07                                                                            & 42\%                                                                                          & 100\%                                                                                          \\ \cline{2-5} 
                               & 0.25                & 0.18                                                                            & 92\%                                                                                          & 92\%                                                                                           \\ \cline{2-5} 
                               & 0.5                 & 0.36                                                                            & 99\%                                                                                          & 20\%                                                                                           \\ \hline
\multirow{9}{*}{BETA(0.5,0.5)} & -0.5                & -0.36                                                                           & 100\%                                                                                         & 100\%                                                                                          \\ \cline{2-5} 
                               & -0.25               & -0.18                                                                           & 99\%                                                                                          & 100\%                                                                                          \\ \cline{2-5} 
                               & -0.1                & -0.07                                                                           & 35\%                                                                                          & 100\%                                                                                          \\ \cline{2-5} 
                               & -0.05               & -0.04                                                                           & 9\%                                                                                           & 100\%                                                                                          \\ \cline{2-5} 
                               & 0                   & 0.00                                                                            & 3\%                                                                                           & 100\%                                                                                          \\ \cline{2-5} 
                               & 0.05                & 0.04                                                                            & 15\%                                                                                          & 100\%                                                                                          \\ \cline{2-5} 
                               & 0.1                 & 0.07                                                                            & 38\%                                                                                          & 100\%                                                                                          \\ \cline{2-5} 
                               & 0.25                & 0.19                                                                            & 99\%                                                                                          & 97\%                                                                                           \\ \cline{2-5} 
                               & 0.5                 & 0.35                                                                            & 100\%                                                                                         & 21\%                                                                                           \\ \hline
\end{tabular}
\caption{The regressor exogeneity test results for $T=1,000$ and $\rho_{Z^* \epsilon^*}=0.2$ at the 5\% significance level. The last two columns on the right show the percentage of times out of 100 simulation trials that the exogeneity hypothesis is rejected.}
\label{tex07}
\end{centering}
\end{table}


\section{Additional Results for the setting of \cite{angristcompulsory}}\label{angappendix}

\begin{table}[H]
\scriptsize.
\begin{centering}
\begin{tabular}{|c|c|c|c|c|}
\hline
\% of   times the exogeneity hypothesis is rejected at 0.01 & Case 2 & Case 4 & Case 6 & Case 8 \\ \hline
EDUC                                                        & 50\%   & 56\%   & 0\%    & 1\%    \\ \hline
QTR120                                                      & 0\%    & 0\%    & 2\%    & 0\%    \\ \hline
QTR121                                                      & 1\%    & 0\%    & 1\%    & 0\%    \\ \hline
QTR122                                                      & 1\%    & 2\%    & 2\%    & 0\%    \\ \hline
QTR123                                                      & 1\%    & 1\%    & 0\%    & 0\%    \\ \hline
QTR124                                                      & 0\%    & 0\%    & 0\%    & 0\%    \\ \hline
QTR125                                                      & 0\%    & 2\%    & 0\%    & 2\%    \\ \hline
QTR126                                                      & 0\%    & 0\%    & 2\%    & 1\%    \\ \hline
QTR127                                                      & 1\%    & 0\%    & 1\%    & 0\%    \\ \hline
QTR128                                                      & 1\%    & 2\%    & 1\%    & 2\%    \\ \hline
QTR129                                                      & 4\%    & 1\%    & 3\%    & 3\%    \\ \hline
QTR220                                                      & 0\%    & 1\%    & 1\%    & 2\%    \\ \hline
QTR221                                                      & 0\%    & 0\%    & 1\%    & 1\%    \\ \hline
QTR222                                                      & 1\%    & 0\%    & 2\%    & 1\%    \\ \hline
QTR223                                                      & 1\%    & 3\%    & 3\%    & 0\%    \\ \hline
QTR224                                                      & 0\%    & 1\%    & 2\%    & 1\%    \\ \hline
QTR225                                                      & 2\%    & 0\%    & 1\%    & 4\%    \\ \hline
QTR226                                                      & 1\%    & 0\%    & 0\%    & 3\%    \\ \hline
QTR227                                                      & 1\%    & 0\%    & 3\%    & 1\%    \\ \hline
QTR228                                                      & 1\%    & 0\%    & 3\%    & 0\%    \\ \hline
QTR229                                                      & 1\%    & 1\%    & 0\%    & 1\%    \\ \hline
QTR320                                                      & 0\%    & 1\%    & 1\%    & 1\%    \\ \hline
QTR321                                                      & 0\%    & 2\%    & 1\%    & 5\%    \\ \hline
QTR322                                                      & 4\%    & 2\%    & 1\%    & 2\%    \\ \hline
QTR323                                                      & 1\%    & 1\%    & 1\%    & 1\%    \\ \hline
QTR324                                                      & 2\%    & 1\%    & 0\%    & 1\%    \\ \hline
QTR325                                                      & 0\%    & 1\%    & 0\%    & 0\%    \\ \hline
QTR326                                                      & 1\%    & 0\%    & 0\%    & 1\%    \\ \hline
QTR327                                                      & 1\%    & 0\%    & 1\%    & 0\%    \\ \hline
QTR328                                                      & 1\%    & 3\%    & 1\%    & 0\%    \\ \hline
QTR329                                                      & 0\%    & 0\%    & 1\%    & 1\%    \\ \hline
\end{tabular}
\caption{The results of the exogeneity test for the endogenous variable and the instruments in \cite{angristcompulsory} at the 1\% significance level. The table shows the percentage of times out of 100 simulation trials that the exogeneity hypothesis is rejected.}
\label{ang2}
\end{centering}
\end{table}
\end{document}